\newcommand{\name} {Rapid-Link} 
\newenvironment{Itemize}%
{\begin{itemize}%
\setlength{\itemsep}{0pt}%
\setlength{\topsep}{0pt}%
\setlength{\partopsep}{0pt}%
\setlength{\parskip}{0pt}}%
{\end{itemize}}
  \newcommand\figcaption{\def\@captype{figure}\caption}
  \newcommand\tabcaption{\def\@captype{table}\caption}
\newtheorem{theorem}{Theorem}[section]
\newtheorem{lemma}[theorem]{Lemma}
\newtheorem{proposition}[theorem]{Proposition}
\newenvironment{remark}[1][Remark]{\begin{trivlist}
\item[\hskip \labelsep {\bfseries #1}]}{\end{trivlist}}
\newtheorem{claim}[theorem]{Claim}
\newcommand{\wh}{\widehat}
\newcommand{\bF}{{\bf F}}
\newcommand{\bA}{{\bf A}}
\newcommand{\bB}{{\bf B}}
\newcommand{\bP}{{\bf P}}
\newcommand{\xqed}{\nobreak \ifvmode \relax \else
      \ifdim\lastskip<1.5em \hskip-\lastskip
      \hskip1.5em plus0em minus0.5em \fi \nobreak
      \vrule height0.75em width0.5em depth0.25em\fi}
\newcommand{\xref}[1]{\S\ref{#1}}
\newcommand{\textred}[1]{\textcolor{red}{#1}}
   \newcommand{\pgwrapper}[2]{\textred{#1: #2}}
   \newcommand{\pgwrapper}[2]{}
\global\def\section{\@startsection {section}{1}{\z@}%
                                   {2ex \@plus 1ex \@minus .1ex}%
                                   {1ex \@plus.2ex}%
                                   {\normalfont\bfseries\scshape\fontsize{11}{13}\selectfont}}
\global\def\subsection{\@startsection{subsection}{2}{\z@}%
                                     {2ex\@plus 1ex \@minus .1ex}%
                                     {1ex \@plus .2ex}%
                                     {\normalfont\bfseries\fontsize{10}{12}\selectfont}}
\global\def\subsubsection{\@startsection{subsubsection}{3}{\z@}%
                                     {2ex\@plus 1ex \@minus .1ex}%
                                     {1ex \@plus .2ex}%
                                     {\normalfont\itshape\fontsize{10}{12}\selectfont}}
\global\def\@maketitle{%
  \newpage
  \begin{center}%
  \let \footnote \thanks
  \null
    \vskip -.3em%
    {\LARGE\bf \@title \par}%
    \vskip 1em%
    {\large
      \lineskip .5em%
      \begin{tabular}[t]{c}%
        \@author
      \end{tabular}\par}%
    \vskip 1em%
    {\large \@date}%
  \end{center}%
  \par
  \vskip 2em}
\begin{document}

\title{
Agile Millimeter Wave Networks with Provable Guarantees 
}

\newcommand{\supsym}[1]{\raisebox{6pt}{{\footnotesize #1}}}

\numberofauthors{1} 
\author{
\alignauthor  Haitham Hassanieh$^{\dagger 1}$ \quad  Omid Abari$^{\dagger 2}$ \quad Michael Rodreguez$^2$ \quad Mohammed Abdelghany$^3$ \quad Dina Katabi$^2$ \quad Piotr Indyk$^2$\\
\affaddr{$^1$ UIUC \quad $^2$ MIT \quad $^3$ UCSB} \\ 
\affaddr{$^\dagger$Co-primary Authors}
}


\date{}
\maketitle

\newcommand*\wrapletters[1]{\wr@pletters#1\@nil}
\def\wr@pletters#1#2\@nil{#1\allowbreak\if&#2&\else\wr@pletters#2\@nil\fi}
\newcommand\ie{\textit{i.e.}}
\newcommand\newchanges[1]{{#1}}

\begin{sloppypar}

\noindent{\bf Abstract}-- 
There is much interest in integrating millimeter wave radios (mmWave) into wireless LANs and 5G cellular networks to benefit from their multiple GHz of available spectrum. Yet unlike existing technologies, e.g., WiFi, mmWave radios require highly directional antennas. 
Since the antennas have pencil-beams, the transmitter and receiver need to align their antenna beams before they can communicate.  Existing solutions scan the entire space to find the best alignment. Such a process has been shown to introduce up to seconds of delay, and is unsuitable for wireless networks where an access point has to quickly switch between users and accommodate mobile clients.

This paper  presents \name, a new protocol that can find the best mmWave beam alignment without scanning the space.  Given all possible directions for setting the antenna beam, \name\  provably finds the optimal direction in logarithmic number of measurements. Further, \name\ works within the existing 802.11ad standard for mmWave LAN, and can support both clients and access points. We have implemented \name\ in a mmWave radio and evaluated it empirically. Our results show that it reduces beam alignment delay by orders of magnitude. In particular, for highly directional mmWave devices operating under 802.11ad, the delay drops from over a second to 2.5~ms.


\section{Introduction}\label{sec:intro}
The ever-increasing demands for mobile and wireless data have placed a huge
strain on wireless networks~\cite{cisco, ericsson, umts}.  Millimeter wave
(mmWave) frequency bands address this problem by offering multi-GHz of
unlicensed bandwidth,  200$\times$ more than the bandwidth allocated to today's
WiFi and cellular networks~\cite{pi2011introduction, rangan2014millimeter}.
They range from the 24GHz ISM band all the way to hundreds of
GHz~\cite{infineon_24ghz,FCC_24ghz}.  They are expected to play a central role
in dealing with increased multimedia traffic, the introduction of new high
data-rate applications such as virtual reality, and the anticipated surge in
IoT wireless devices~\cite{5Gvision,MoVR_hotnets, siBeam_VR}. This role has
been cemented with new standards that incorporate mmWave technologies into 5G
cellular networks~\cite{rangan2014millimeter, mmW5G, mmw_AD}, and 802.11
wireless LAN~\cite{802.11ad}.

Millimeter wave radios however do not play well with mobile devices or dynamic
environment, a key challenge that has been emphasized in the
standards~\cite{802.11ad_paper,802.11ad}. Specifically, mmWave signals
attenuate quickly with distance; hence they need to use highly directional
antennas to focus their power on the receiver.  Luckily, due to their small
wavelength (millimeter scale),  it is possible to pack hundreds or thousands of
antennas in a small area, creating an array with many antennas, and hence a
very narrow beam, as shown in Figs.~\ref{fig:intro}(a) and (b).   Yet, since
the beam is very narrow, communication is possible only when the transmitter's
and receiver's beams are well aligned. Current solutions for aligning the beams
scan the entire space, trying various beam alignments until they find the best
one.  This process can take up to several seconds~\cite{60GHzMobicom,WiMi}. 
Such a long delay makes the deployment of mmWave links infeasible in wireless
networks, where the access point has to keep realigning its beam to switch
between users and accommodate mobile clients.


To understand the problem, consider a phased array with many antennas. Its
beam-width can be a few degrees or even smaller.  The naive approach to finding
the best alignment would have the transmitter and receiver scan the 3D space
with their beams to find the direction of maximum power, as shown in
Fig.~\ref{fig:intro}(b). The receiver has to repeat the scan for each choice of
beam direction on the transmitter side. Thus, the complexity of this exhaustive
search is $O(N^2)$, where $N$ is the number of possible beam directions.  To
speed up the search, the 802.11ad standard decouples the steering at the
transmitter and receiver.  In particular, the transmitter starts with a
quasi-omnidirectional beam, while the receiver scans the space for the best
beam direction. The process is then reversed to have the transmitter scan the
space while keeping the receiver quasi-omnidirectional~\cite{802.11ad,
mmw_heir6} (see~\xref{sec:baselines} for the details).   This approach reduces
the search complexity to $O(N)$. Still, for a beam of a few degrees, the delay
can be hundreds of milliseconds to seconds~\cite{60GHzMobicom,WiMi}, which
would easily stall realtime applications.

But, can one identify the best alignment without scanning the space of all
possible signal directions? In principle, "yes".  There is much past work that
shows that mmWave signals travel along a small number of paths (e.g., the
direct path from transmitter to receiver and a few
reflections)~\cite{rangan2014millimeter,mmw_measurement2}. This means that the
space of possible signal directions is sparse. One would hope to use the sparse
recovery theory to estimate the direction of the best alignment using a
logarithmic number of measurements~\cite{CRT,Don,SFFT1}, hence avoiding
excessive delays.  

\begin{figure*}
\centering
\begin{tabular}{ccccc}
	\includegraphics[angle=0,width=1.4in]{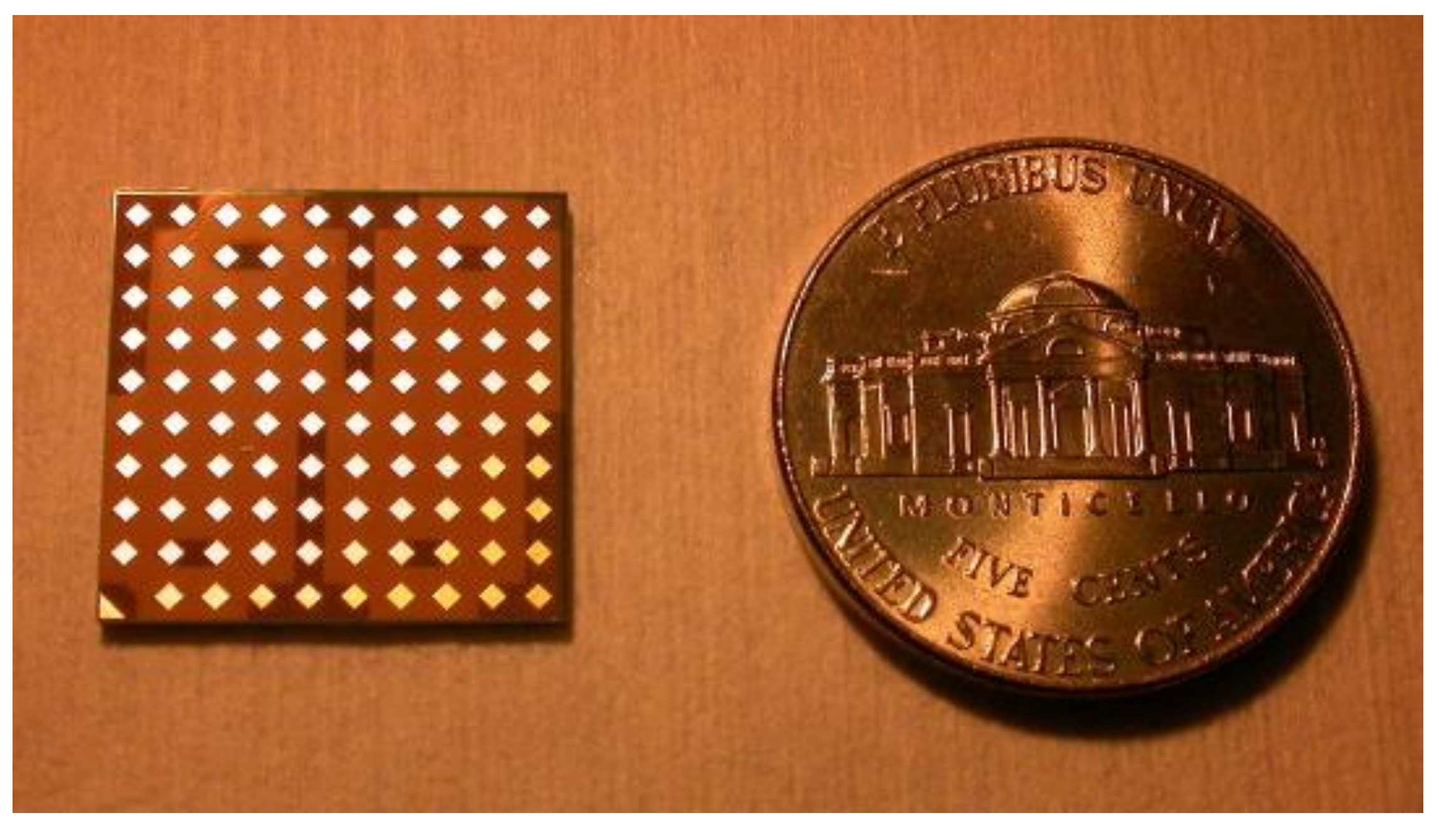} && \quad 
	\includegraphics[angle=0,width=1.8in]{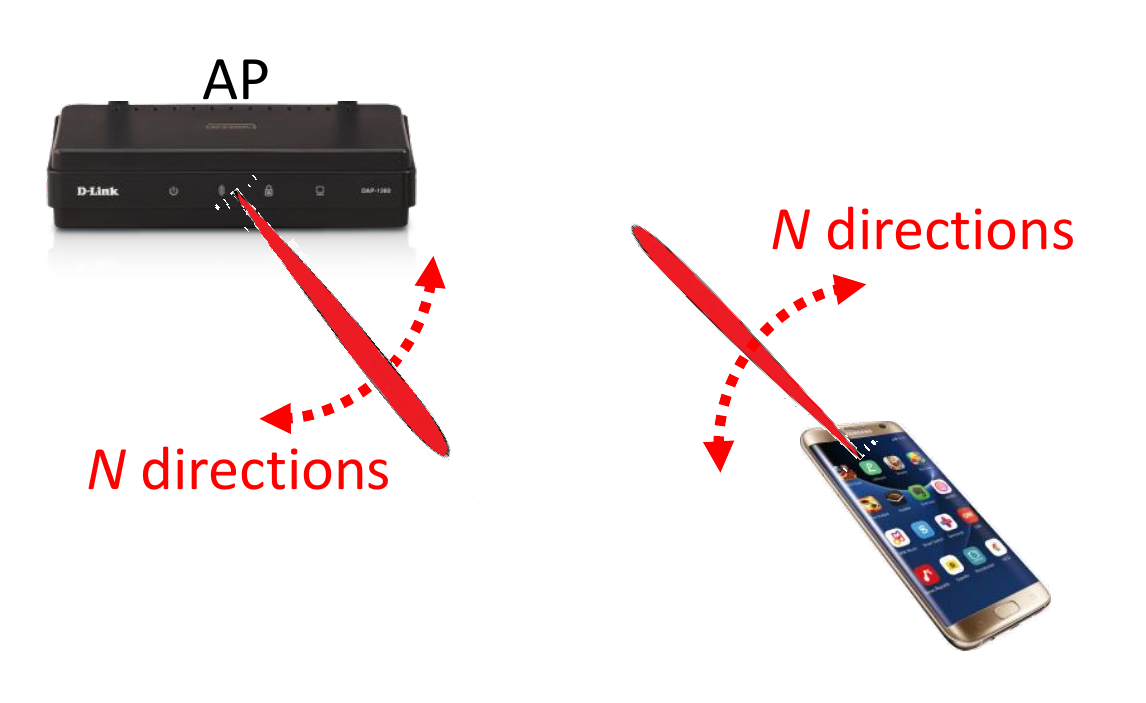} && \quad 
	\includegraphics[angle=0,width=1.6in]{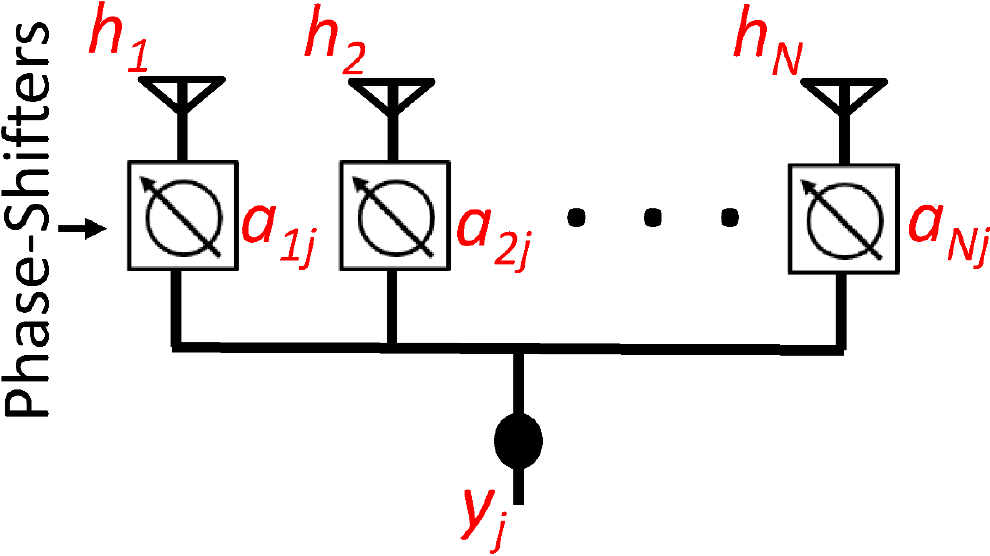} \\
(a) A mmWave phased array&&
(b) Beam alignment in mmWave radios&&
(c) Phased array's architecture.\\
\end{tabular}
\vskip -0.1in
\caption{\textbf{mmWave Communication} (a) An example of mmWave phased arrays where hundreds of antennas are packed in a small area~\cite{IBM_array}. (b) mmWave radios need to find the best alignment between the transmitter's and receiver's beams to establish a communication link. (c) mmWave phased arrays use a set of phase shifters to steer the beam electronically.}
\vskip -0.1in
\label{fig:intro}
\end{figure*}

\vskip 0.06in\noindent
{\bf Problem Formalization:}
To be more concrete, let us formalize the problem. The objective of beam alignment is to measure the signal power along each spatial direction. Let $\bf{x}$ be an $N$-element vector that denotes the signal along various spatial directions. Since in practice the signal arrives only along few directions $K$, we can say that $\bf{x}$ is {\it K-sparse}.  Let, $h_i$ be the signal at the i$^{th}$ antenna, as shown in Fig.~\ref{fig:intro}(c). Based on the standard antenna array equation~\cite{ArrayTrack}, we can write ${\bf h} = \bF' {\bf x}$, where $\bF'$ is the inverse Fourier transform matrix.
We can steer the antenna beam by applying a phase shift to each antenna $a_i = e^{-j 2\pi \phi_i}$ (see Fig.~\ref{fig:intro}(c)).  For each setting of the phase shifters, we can measure the received power as $y_j = |{\bf a}_j \bF' {\bf x}|$, where the notation $|.|$ refers to the magnitude of the signal (i.e., the power),\footnote{The power is the magnitude square; hence knowing the magnitude is the same as knowing the power.}
and ${\bf a}_j$ is a vector whose elements are the applied phase shifts. In 802.11ad, each measurement corresponds to sending a special frame. Each time a frame is sent, the signal incurs a random phase due to the Carrier Frequency Offset (CFO)
between the transmitter and receiver~\cite{802.11ad,MegaMIMO} (see~\xref{sec:problem} for details). Thus, one cannot compare the phase of two measurements; only the magnitude of the measurements is relevant. Since we want to know the power along each spatial direction, the problem can be formulated as:
$$
\text{estimate each $|x_i|$, given measurements~~} y_j = |{\bf  a}_j \bF' {\bf x}|.
$$ 
Of course, one way to solve this problem is to use $N$ measurements, each time setting ${\bf a}$ to one row of the Fourier matrix. This corresponds to measuring one direction every time, as proposed in 802.11ad.  Alternatively, one could leverage that the vector $\bf{x}$ is sparse, and hope to solve the problem in a logarithmic number of measurements. Unfortunately, however, using off-the-shelf algorithms like compressive sensing or the sparse FFT does not work since neither of them deal with the scenario where the measurements return the magnitude of the complex signal, i.e., the presence of the $|.|$ term~\cite{CRT,Don,gilbert2014recent}.
To the best of our knowledge, no algorithm with provable logarithmic guarantees exist for this problem.

\vskip 0.06in\noindent
{\bf Our Design \& Analysis:}
This paper introduces \name, a novel protocol for fast beam alignment in mmWave networks. \name\  provably finds the best alignment in a logarithmic number of measurements. At a high-level, it works as follows:  Instead of creating a narrow beam and sampling the power along one spatial direction each time,  \name\ manipulates the phase shifters to create multi-armed beams, which can sample multiple spatial directions simultaneously (see Fig~\ref{fig:example}(a)). When sampling multiple directions simultaneously, one cannot tell which direction has produced the resulting power. \name\ however uses a carefully-designed combination of such multi-armed beams, 
which quickly provide enough information to identify the signal power along all directions, as we detail in~\xref{sec:alg}. We formally analyze \name\ and prove that it can deliver the best alignment in $O(K \log N)$ measurements, where $K$ is the number of paths traveled by the signal.  Since $K$ is typically 2 or 3 paths~\cite{rangan2014millimeter,mmw_measurement2},  \name\ can significantly reduce the beam alignment delay. 

 \name\ has additional important features:
\begin{Itemize}
\item \name\ is compatible with the 802.11ad protocol, i.e., a \name\ device can work with a non-\name\ device to find the best alignment while using the 802.11ad protocol.  In this case, the \name\ device finds the best alignment on its side in a logarithmic number of measurements whereas the traditional 802.11ad device takes a linear number of measurements. 
\item When both transmitter and receiver are equipped with \name, they can coordinate the search and find the best transmit and receive beam directions in $O(K^2 \log N)$ (as opposed to $O(K^2 \log^2N)$). They can do so without using quasi-omnidirectional beams, which are known to reduce the quality of the alignment~\cite{BoonBane,hosoya2015multiple}.
\item Finally, by the results of \cite{ba2010lower,price2011}, our measurement complexity of $O(K\log N)$ is asymptotically {\em optimal} for small $K$ --i.e., it cannot be further reduced.
\end{Itemize}

\vskip 0.06in\noindent
{\bf Implementation \& Empirical Results:}
We have evaluated \name\ empirically using mmWave radios, each equipped with a phased array that has 8 antennas.  We have implemented the design as a daughterboard for the USRP software radio, which enables easy manipulation of mmWave signals using standard GNU-radio software.  We also use simulations to explore its scaling behavior to large arrays with hundreds of antennas, which are expected in the future~\cite{qualcomm}. We compare \name\ with two baselines: an exhaustive scan of the space to find the best beams, and the quasi-omnidirectional search proposed in the 802.11ad standard. Our evaluation reveals the following
findings. 
\begin{Itemize}
\item In comparison with the exhaustive search, \name\ reduces the search
time by one to three orders of magnitude for array sizes that range from 8
antennas to 256 antennas. In comparison to the quasi-omnidirectional search,
\name\ reduces the number of measurements by 1.5$\times$ to 16.4$\times$ for the same 
range of array sizes. In particular, for large arrays with 256 antennas, \name\ reduces the alignment delay from over a second to 2.5ms. 
\item  The quasi-omnidirectional search yields poor performance in scenarios with multi-path effects. This is because using a quasi-omnidirectional antenna allows the signals along different paths to combine destructively, which yields low power and prevents accurate detection of the best signal direction. Further, due to imperfections in the quasi-omnidirectional patterns~\cite{BoonBane,hosoya2015multiple}, some paths can get attenuated and hence this approach can choose the wrong direction to align its beam.  In contrast, \name\ performs well both in single path and multipath scenarios. 
\end{Itemize}

\vskip 0.06in\noindent
{\bf Contributions:}
This paper provides the first mmWave beam alignment algorithm with provably logarithmic measurements for  the phased-array architecture commonly used in mmWave access points and clients. 
The paper also demonstrates through an implementation and empirical evaluation the feasibility of the design and its practical gains.

\section{Related Work}
\label{sec:related}

\vskip 0.06in \noindent {\bf (a) mmWave Solutions \& Research:}
Research on fast beam alignment for mmWave can be divided into two classes: empirical and simulation-based. Past empirical work has demonstrated the large delays incurred during beam alignment~\cite{60GHzMobicom,WiMi}. It also proposed failover protocols that switch to the next best beam when the current beam becomes blocked~\cite{BeamSpy,MOCA}.  This approach however assumes that the signal propagation paths are known a priori and hence one can quickly switch to a failover direction.  

Much of the previous work on fast beam alignment is simulation-based.  Most of this work proposes enhancements to the standard that impose a form of hierarchy to speed up the search~\cite{mmw_heir1, mmw_heir2, mmw_heir3, mmw_heir4, mmw_heir5, mmw_heir6, mmw_heir7}. However, in practice, hierarchical search requires feedback from the receiver to guide the transmitter  at every stage of the hierarchy, which incurs significant protocol delay. 

Our work is closest to past work that leverage compressive sensing to speed up the search for best beam alignment. 
The work in~\cite{mmw_hybrid1, mmw_hybrid2} requires a more complex architecture with a quadratic number of phase shifters, and multiple transmit receive chains (typically 10 to 15~\cite{mmw_hybrid1}).  Despite being less constrained, the best known results for such complex architecture can guarantee a logarithmic number of measurements {\it only} for scenarios with no multi-path ( $K$ is strictly $1$), and  only for the average case error~\cite{mmw_hybrid1}. In comparison, \name\ can provably find the best alignment in a logarithmic number of measurements even when the signal experiences multipath, and its guarantees apply to the worst case behavior. Furthermore, since \name\ applies to the more restricted architecture which has a linear number of phase shifters and only one transmit-receive chain, its guarantees naturally extend to this more complex architecture. 

Some past theoretical work applies the standard compressive sensing and assumes it can correctly obtain the phase of the measurements~\cite{mmw_cs1,mmw_cs2}.  This approach does not work with practical 802.11ad or cellular radios because it ignores CFO (Carrier Frequency Offset) which corrupts the phase of the measurements~\cite{MegaMIMO}. Note that correcting for CFO across measurement frames is neither possible in the current 802.11ad standard nor easy. For example, a small offset of 10 parts per million at such high frequencies can cause drastic phase misalignment in less than hundred nanoseconds.

Finally, some companies such as TP-Link and SiBeam~\cite{TPLink, Wilocity, SiBeam, OpenMili, qualcomm,samsung} offer mmWave systems but they take a long time to steer the beam or require complex hardware, making them unsuitable for mobile clients~\cite{60GHzMobicom,WiMi}. Also, some research on mmWave focuses on point-to-point links for Data Centers~\cite{flyways, mmw_DC2,mmw_DC3} or cellular and WiFi applications~\cite{60GHzMobicom, BeamSpy, WiMi, MOCA, BBS}.  These implementations typically use a horn antenna to direct the beam, which requires mechanical steering and is unsuitable for mobile links.

 
 \vskip 0.06in \noindent {\bf (b) Sparse Recovery Theory:} 
The theoretical problem we consider falls under "sparse phase retrieval"~\cite{iwen2017robust,jaganathan2015phase}. Generally, the goal is to recover an approximation of a $K$-sparse vector, ${\bf x}$,  from $M$ measurements of the form $|\bB {\bf x}|$. The presence of the absolute value is what makes this problem different from the usual compressive sensing~\cite{CRT,Don} and sparse FFT~\cite{gilbert2002near,GSM,SFFT1, SFFT2, gilbert2014recent}.  In our context, we have an extra restriction that the matrix $\bB$ is of the form $\bA \bF'$, where $\bF'$ is the inverse Fourier transform and all entries in $\bA$ have unit magnitude, i.e.,  $|a_{ij}|=1$ for all $i,j$. To the best of our knowledge, this form has not been considered before.

Some of our proof techniques are inspired by past work on sparse FFT,  particularly the work in~\cite{gilbert2002near,GSM} which used boxcars filters for sparse Fourier transform algorithms. However, the technical development of our proofs is different due to the leakage between multiple  beam arms, which requires extra layers of randomization.  Furthermore, our results and problem are different due to the restrictions on our measurements that do not exist in sparse FFT. 

\vskip 0.06in \noindent{\bf (c) Massive MIMO Beamforming:} 
Massive MIMO~\cite{argos, argos2} has many antennas, but the signal from each antenna can be received and manipulated independently.  In contrast, a mmWave phased array receives only the combined signal from all its antennas. Thus, beamforming techniques in massive MIMO (and standard MIMO) do not apply to mmWave phased array.

\begin{figure*}[t!]
	\centering
	\includegraphics[width=6.5in]{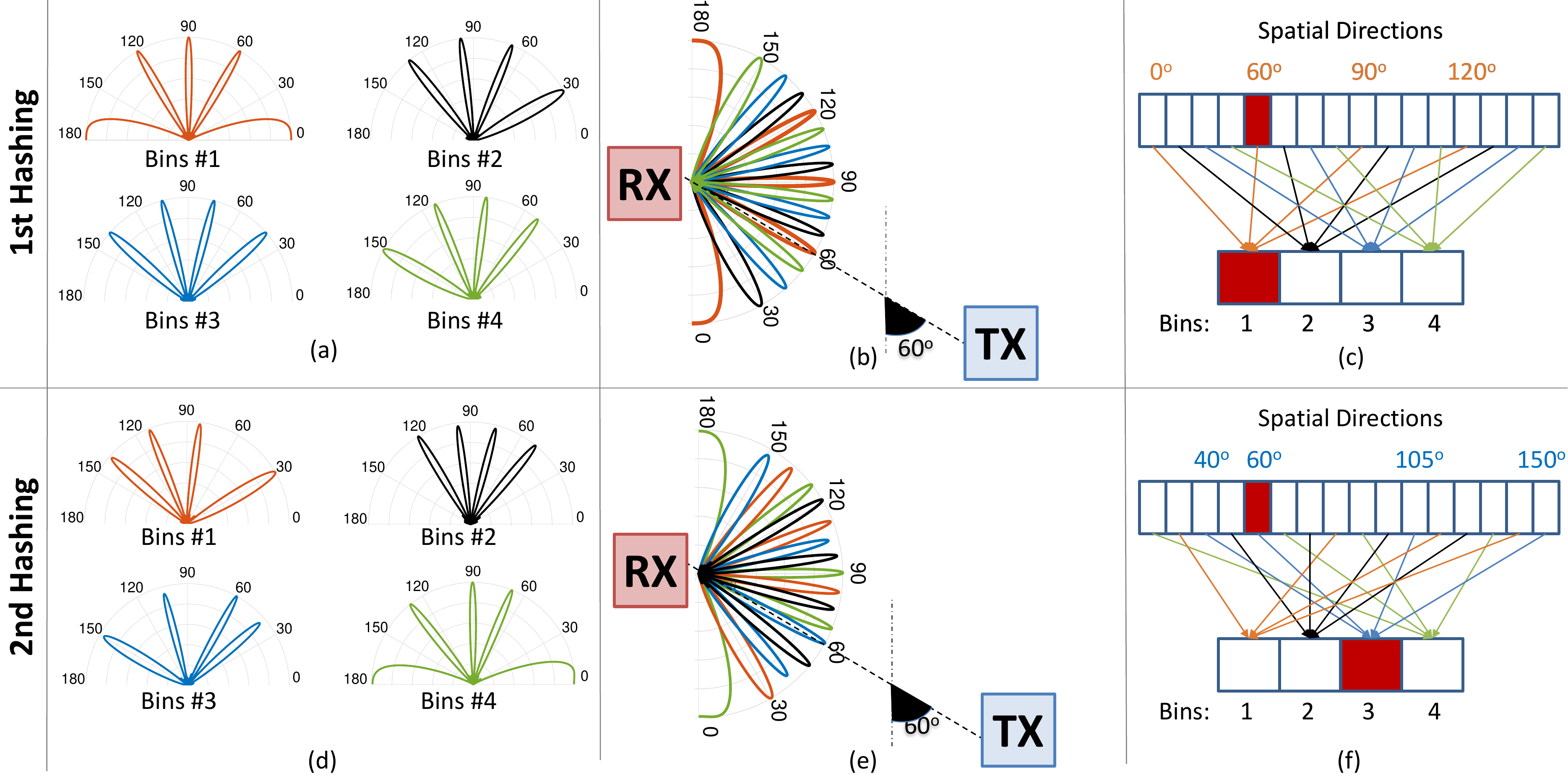}
\vskip -0.1in
\caption{\textbf{Illustrative example of \name's algorithm}}
\vskip -0.1in
\label{fig:example}
\end{figure*}

\section{Illustrative Example}\label{sec:example}
Let us start by explaining the intuition underlying our algorithm. Consider a scenario where the transmitter is at a $60^o$ angle with respect to the receiver. For clarity, assume a 2D setting. The same argument can be extended to 3D. We would like the receiver to detect that the best alignment is along the $60^o$ angle, but without scanning the space.  

\name\ avoids the need to scan all spatial directions by using multi-armed beams, which simultaneously sample the signal along multiple directions. Say for example, that there are 16 possible directions in space, i.e., $N=16$.  \name\ can sample all of these directions using 4 multi-armed beams, each covering $N/4=4$ directions in space. Fig.~\ref{fig:example}(a) shows four such multi-armed beams, and Fig.~\ref{fig:example}(b) shows how together they cover the whole space of directions.  Such set of multi-armed beams operates like a hash function, where $N=16$ directions are hashed into 4 bins, and each bin covers $N/4 = 4$ distinct directions.  The value of the bin represents the combination of the signals that hash into it. For example, if the signal is coming along the $60^\text{o}$ direction and $60^\text{o}$ hashes to bin number 1, then only bin one will have energy whereas the other bins will have no energy, as shown in Fig.~\ref{fig:example}(c).  Thus, one can ignore directions that hash to bins 2,3, and 4, and focus only on directions that hash to bin 1.  This significantly reduces the search space to the directions that hash to the first bin (i.e., the first multi-armed beam).

At this stage, we know that the signal could have come from the directions covered by the first bin i.e., $0^\text{o}$, $60^\text{o}$, $90^\text{o}$ and $120^\text{o}$. But we do not know which among them is the correct direction.  Thus, we change the hash function and try again. To do so, we use a second set of multi-armed beams which together hash the whole space of directions into a set of bins. The hash however is randomized with respect to the previous hash so that directions that got hashed together are unlikely to hash together again.  Figs.~\ref{fig:example}(d,e,f) show an example of hashing the spatial directions into bins after randomizing the multi-armed beams and hence randomizing which directions map into which bins. The first bin now collects energy along $30^\text{o}$, $80^\text{o}$, $110^\text{o}$ and $140^\text{o}$. Since the signal is arriving along $60^\text{o}$, it will be captured by the third bin which is represented in blue in Fig~\ref{fig:example}(f). Hence, in this second hashing, only the energy of the third bin will be large. This suggests that the signal arrived along one of the directions that mapped to the third bin which in this case are $40^\text{o}$, $60^\text{o}$, $105^\text{o}$ and $150^\text{o}$. Since the $60^\text{o}$ direction is the only common candidate from both the first hashing and the second hashing, \name\ picks it as the direction of the signal. Thus, \name\ is able to find the correct direction from which the transmitter's signal arrives without having to scan all possible directions.

The above example has ignored one issue: What if the signal arrives along multiple paths (i.e., multiple directions, not just one).  
In this case, signals from different directions can collide in the same bin. Since signals are waves, they can collide to cancel each other.  Thus, there is a probability that a bin may have negligible power, though it does contain the directions of real signal paths. To avoid missing actual signal directions, each time we hash, we need to randomize the hashing. \name\ repeats the hashing while randomizing the directions that fall into the same bin. This ensures that if the signals along two paths collide (and cancel each other) in one bin in the first hashing, the probability they continue to collide decreases exponentially with more hashes.

In more general settings, \name\ can use more than just two hash functions.
After each hashing, the bins that have energy vote for the directions that map
into those bins.  Thus, a direction will get a vote if it falls in a bin that
has energy.  Since mmWave signals at most have two or three
paths~\cite{rangan2014millimeter,mmw_measurement2}, the directions from which
the signal can arrive are sparse. As a result, these directions are likely to
get a lot of votes whereas other directions are unlikely to get a lot of votes.
\name\ can quickly discover the directions of the paths along which the signal
arrives by picking the direction that acquire the highest number of votes.



\section{\name}\label{sec:rapidlink}
This section describes \name\ in detail. For clarity, we describe the algorithm assuming only the
receiver has an antenna array whereas the transmitter has an omni-directional antenna. In~\xref{sec:2sides}, we extend it to the to the case where both transmitter and receiver have antenna arrays.

\subsection{Problem Statement}\label{sec:problem}
Recall that the problem is defined as follows:  Let $\bf{x}$ be a $K$-sparse $N$-element complex vector that denotes the signal along various spatial directions. The objective is to estimate the power (i.e., magnitude) of the signal along each direction,  $|x_i|$, using a small number of measurements of the form $y_j = |{\bf a}_j \bF' {\bf x}|$, where $\bF'$ is the inverse Fourier matrix, and ${\bf a}_j$  is a vector of phase shifts, $|a_{ij}|=1$, that are under our control. 

Before describing our solution to this problem, it is important to understand why the phases of the measurements are not usable. Every measurement involves sending a frame from transmitter to receiver. Since the oscillators on the transmitter and receiver always experience some CFO (Carrier Frequency Offset)~\cite{MegaMIMO}, the signal of each frame incurs an additional unknown phase shift. Further, this phase shift changes across frames. 
Correcting the CFO across measurement frames is not supported in the 802.11ad standard~\cite{802.11ad}. Furthermore, such a correction will be very hard due to the high frequencies of mmWave. For example, a small offset of 10 parts per million at such frequencies can cause a large phase misalignment in less than hundred nanoseconds. 

\subsection{\name's Algorithm}\label{sec:alg}
\name\ works in two stages. First, it randomly hashes the space into bins (using multi-armed beams) such that each bin collects power from a range of directions. Second, it uses a voting mechanism to recover the directions that have the power. Below, we describe these two stages in detail.

\vskip 0.06in \noindent{\it A. Hashing Spatial Directions into Bins}\newline
\name\ hashes the signal along various directions to bins using multi-armed beams. Let us refer to the arms in each multi-arm beam as the {\em sub-beams}.  Let $R$ be the number of sub-beams in each beam, $B$ the number of bins in each hash function, and $L$ the number of hash functions. 

Each setting of the phase shifter vector, ${\bf a}$, creates a different beam pattern and the resulting measurement $y =|{\bf a} \bF' {\bf x}|$ will correspond to the power in the directions covered by the beam pattern. 

So how do we create multi-armed beams? It should be clear that, given the structure of the measurements, we can create a beam that points in one direction, $s$, by setting $\bf a$ to the $s$-th row in the Fourier matrix. Thus, to create a multi-armed beam,  \name\ divides the vector ${\bf a}$ into $R$ segments each of length $N/R$ \ie, ${\bf a}_{[1:N/R]},{\bf a}_{[N/R+1:2N/R]},\cdots,{\bf a}_{[(R-1)N/R:N]}$. Each segment then sets its sub-beam towards a different direction. This is done by setting the segment ${\bf a}_{[i:i+N/R]}$ to the corresponding segment in the desired row of the Fourier matrix. Formally, if an index $i$ belongs to the $r$-th segment pointing towards the direction $s^r$, then $a_i = ({\bf F}_{s^r})_i  \cdot e^{-j 2\pi t_r/N}$, where $t_r$ is a random integer between $0, \ldots N-1$, and  $({\bf F}_{s^r})_i$ refers to the $i$-th entry in row $s^r$ in the Fourier matrix. The term $e^{-j 2\pi t_r/N} $ results in a phase shift of the sub-beam without changing its direction, and it simply helps in the proof.

Due to the properties of the Fourier transform, the sub-beam created by each segment will be wider than a single beam created by the full array by a factor of $R$, so each sub-beam covers $R$ adjacent directions.  Since there are $R$ such sub-beams, the multi-armed beam created by this setting of the vector ${\bf a}$ will cover $R^2$ directions. Now, if we wish to hash the space of directions into $B$ bins, then each will cover $R^2$ directions and hence $B = N/R^2$. 

But how do pick the directions of the sub-beams in each multi-armed beam?
The best scenario is when the sub-beams in each multi-armed beam are well spaced so that the leakage from their side-lobes is
minimized.  Fig~\ref{fig:beam_hashing}(a) shows an example of well-spaced sub-beams, where we have a multi-armed beam with two sub-beams directed $60^o$ apart.  In this case, the beam pattern will hash directions that are $60^o$ apart into the same bin. By shifting the direction, we can then create all different bins in the hash function, as shown in Fig~\ref{fig:beam_hashing}(b) where each color corresponds to a bin in the hash function.  Formally,  the above process is achieved by setting the direction of the $r$-th segment in bin $b$ to be equal to $s^r_b= Rb + r P$, where $P=N/R$ is the spacing between two sub-beams corresponding to the same bin.

\begin{figure}[t!]
\centering
\begin{tabular}{cc}
	\includegraphics[width=1.3in]{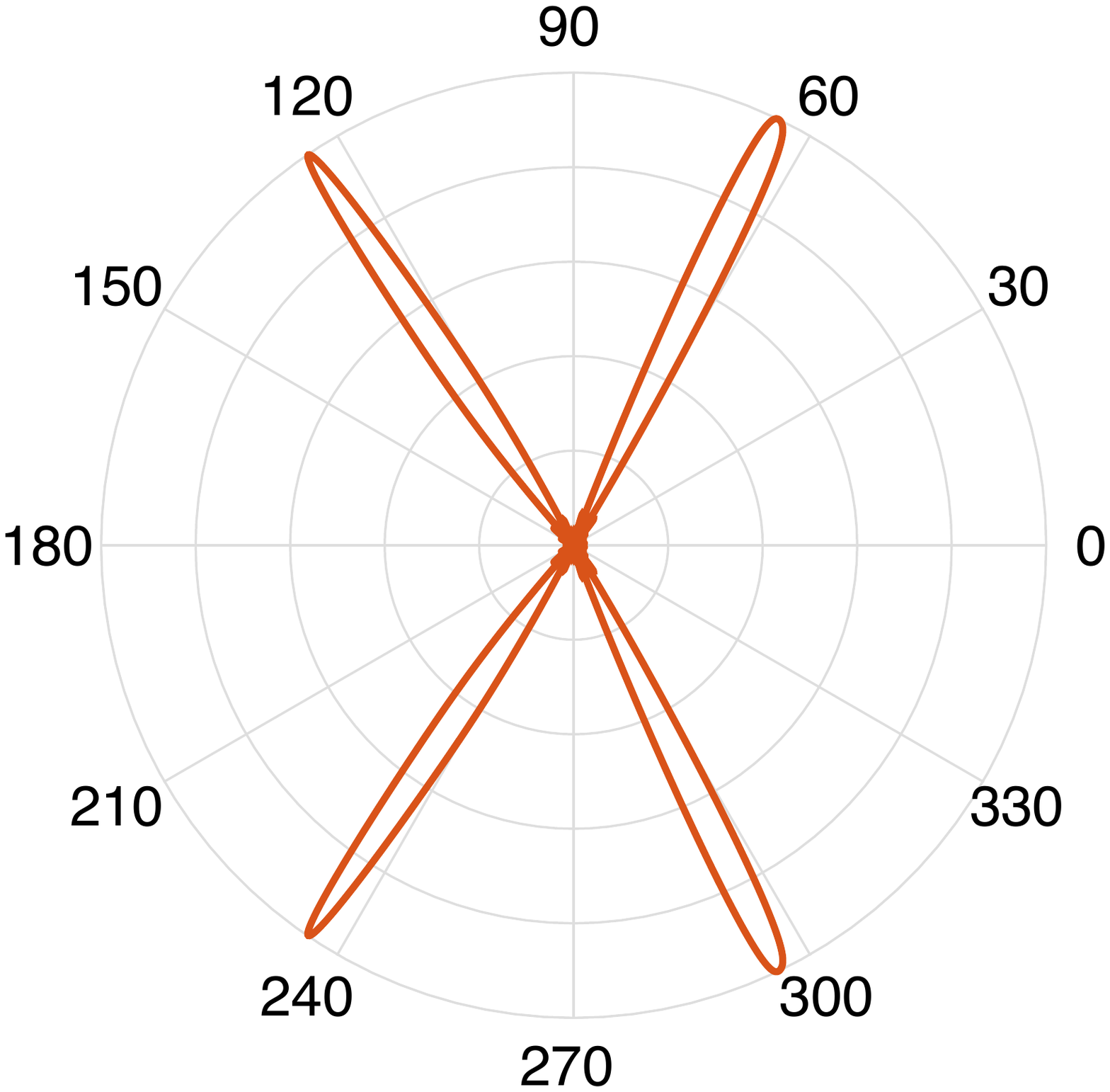} & \includegraphics[width=1.3in]{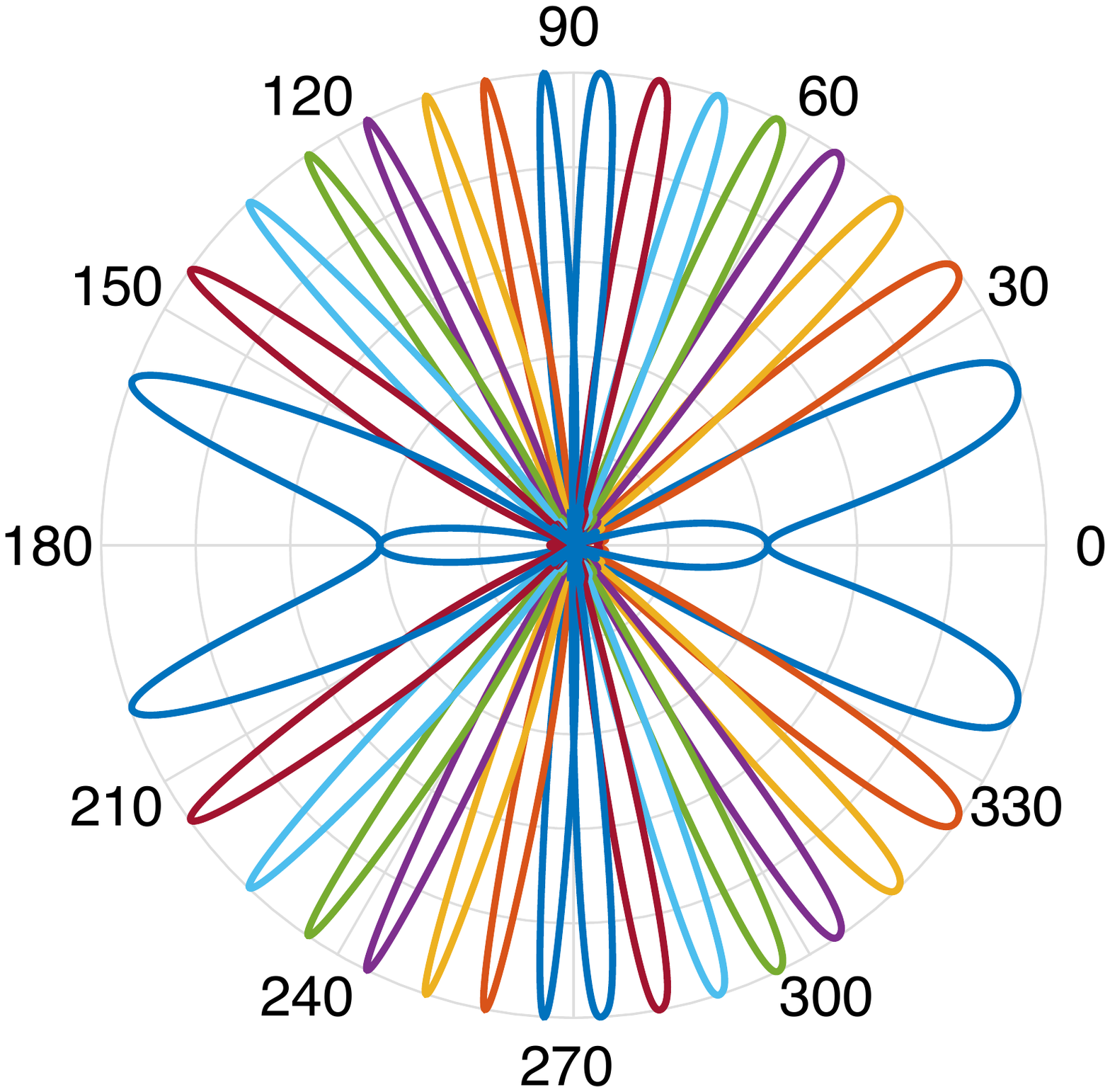}\\
(a) & (b) 
\end{tabular}
\vskip -0.1in
\caption{\footnotesize \textbf{Hashing Beam Patterns} (a) \name\ hashes well spread directions into a bin; (b) All the bins of \name's hash function where the same color corresponds to directions in the same bin, and hash function covers the whole space.} 
\vskip -0.1in
\label{fig:beam_hashing}
\end{figure}

The question that remains is how do we randomize the hashing process to make sure that two large signals are not hashed to the same bin every time.  Ideally, we can solve this problem by randomly permuting the entries in of the vector $\bf x$. Physically, however we cannot permute $\bf x$. Instead, we leverage a nice property of the Fourier transform that says that we can pseudo-randomly permute the {\em input} of the Fourier transform (i.e., the entries of the vector ${\bf x}$) by pseudo-randomly permuting its output (i.e., the entries of the vector $\bF'{\bf x}$)~\cite{gilbert2002near,GSM,SFFT2}.  In the simplest setting, suppose that we want to permute  the input samples according to:

\begin{equation}
\label{e:time}
{\bf x'}(f) = {\bf x}(\sigma f),
\end{equation}
where $\sigma$ is chosen at random for an appropriate distribution, and all operations on indices are done modulo $N$. This is equivalent to permuting the output $\wh{\bf x} = \bF' {\bf x}$ according to:
\begin{equation}
\label{e:freq}
\wh{\bf x'}(t) = \wh{\bf x}(\sigma^{-1} t ).
\end{equation}
where $\sigma^{-1}$ is the inverse of $\sigma$ modulo $N$.  This enables us to ``randomize'' the positions of large entries in ${\bf x}$ by rearranging the entries of $\bF' {\bf x}$.  This result is very useful since we can easily permute $\bF' {\bf x}$ in our measurements by simply permuting the elements in the vector $\bf a$, i.e., by permuting the phase shifts applied to the phase shifters on each antenna. 

More generally,  let us use the vector ${\bf y}_{1\times B} = |\bA_{B\times N} \bF' {\bf x}|$ to refer to the $B$ measurements performed as part of a hash function, where $\bA$ is a matrix of phase shifts. To permute this hash function, we measure ${\bf y}_{1\times B} = |\bA \bP' \bF' {\bf x}|$,  where we use the term $|.|$ to refer to the magnitude of the individual elements in the vector, and the matrix $\bP'$ is a generalized permutation matrix.\footnote{
$\bP'$ is a generalized permutation matrix if each row or column of $\bP'$ contains exactly one non-zero entry, and that entry has unit magnitude. We use the matrix $\bP'$ defined as in~\cite{SFFT2}:  the $i$-th column of $\bP'$ contains the value $\omega^{a\sigma i}$ in the row $\sigma(i-b)$, where $\omega=e^{2 \pi j/N}$ and $a,b,\sigma$ are randomly chosen parameters.  This has the effect of rearranging the vector ${\bf x}$ by moving the entry $x_i$ to $x_{\rho(i)}$ for $\rho(i)=\sigma^{-1} i +a$, and multiplying it by $\omega^{bj+\sigma b a}$. Note that the role of the latter multiplier is similar to that of the phase multiplier described in~\xref{sec:alg}(A), namely it helps with the proof. See the appendix for a more detailed description of the process.}
That is, our phase shifter matrix is equal to $\bA \bP'$ (note that each entry of the latter matrix has a unit magnitude, i.e., it represents proper phase shifts).
This is equivalent to measuring $|\bA \bF' \bP {\bf x}|$, for the corresponding generalized permutation matrix $\bP$.

\vskip 0.06in \noindent{\it B. Recovering the Directions of the Actual Paths}\newline
After hashing the spatial directions into bins, \name\ discovers the
actual directions of the signal using a voting scheme where each
bin gives votes to all directions that hash into that bin. After few
random hashes, the directions that have energy will collect the
largest number of votes which allows \name\ to recover them.
Unfortunately, simply directly applying this voting approach does not work
well because the side-lobes of the beams create
leakage between the bins and hence a strong path in one bin can leak
energy into other bins which corrupts the voting process. To overcome
this problem, \name\ uses a form of soft voting that takes into account
the leakage between the bins.

Specifically, \name\ models the beam patterns (example of which are shown in Fig.~\ref{fig:beam_hashing}(a) and Figs.~\ref{fig:example}(a) and (d),) as a {\em coverage function} $I(b,\rho,i)$ that indicates the coverage of the direction $i$ by the beam corresponding to the bin $b$, assuming the indices are permuted by $\rho$. It is defined as: 
\begin{equation}
	I(b,\rho, i) = |{\bf a}^b {\bf F'}_{\rho(i)} |^2
\end{equation} 
where ${\bf a}^b$ is the vector defining the settings of the phase shifters corresponding to bin $b$.

If we hash into $B$ bins, we will have $B$ such
patterns and collect $B$ measurements ${\bf y}_{1\times B}$
corresponding to the bins. After taking the magnitude squared of each
measurement,  we can estimate the energy of  the signal coming from direction
$i$ as:
\begin{equation}
\label{e:t}
T(i,\rho) = \sum_{b=0}^{B-1} y_b^2\times I(b,\rho,i), 
\end{equation} 
If the estimate $T(i,\rho)$ exceeds a predefined threshold $T$ then we conclude that there is a signal coming from direction $i$.  Otherwise we conclude that there is no such signal. 

\vskip 0.06in \noindent{\it C. Performance Analysis}\newline
A detailed performance analysis is provided in the Appendix.  Here we note the main theorem and its implications. 
\begin{theorem}
\label{t:main}
Suppose that the vector ${\bf x}$ has at most $K$ non-zero entries, at that the energy $x_i^2$ of each non-zero entry $x_i$ is at least $1/K$. Furthermore, suppose that $N$ is a prime. There exists a setting of parameters $T$, $R$ and $B=O(K)$, so that for each candidate direction $i$,  we have that:
\begin{Itemize}
\item If $x_i \neq 0$ then $T(i,\rho) \ge T$ with probability at least $2/3$
\item If $x_i=0$ then $T(i,\rho) < T$   with probability at least $2/3$
\end{Itemize}
\end{theorem}

The probability of correctness can be amplified by performing several ($L$) random hashes and aggregating the results. There are multiple ways of performing the aggregation. A simple approach is to use ``hard voting'', i.e., conclude that a signal is coming from a direction $i$ if the signal from that direction has been detected by the majority of hashes. By Chernoff bound, this approach reduces the probability of incorrect detection from $1/3$ to $e^{-C'' L}$ for some constant $C''$. By letting $L=O(\log N)$, we can compute correct estimates for all  {\em all} indices $i$, with the probability  of failure at most $1/N$. The algorithm uses $BL=O(K \log N)$ measurements and its running time complexity is $O(N K \log N)$. 

The algorithm can be also used to provide estimations of the values of $|x_i|^2$'s. The guarantees are provided by the following theorem. Note that no assumptions about the sparsity of ${\bf x}$ are required,  although the guarantees are meaningful only for $x_i$'s whose magnitude is large enough. This makes the estimate resilient to the presence of small amounts of noise at all coordinates. 

\begin{theorem}
\label{t:main2}
Suppose that $N$ is a prime. There exists a setting of parameters $R$, $B=O(K)$, and a constant $C>1$ so that for each candidate direction $i$,  we have:
\[ \Pr[  |x_i|^2/C - \|{\bf x}\|_2^2/K \le T(i,\rho) \le  C |x_i|^2 + \|{\bf x}\|_2^2/K ] \ge 2/3 \]
\end{theorem}



\subsection{Antenna Arrays on Both Transmitter and Receiver}\label{sec:2sides}
We now extend the above model and algorithm to the case where both transmitter and receiver
have antenna arrays. In this case, each measurement can be written as:
\begin{equation}
y_{1\times 1} =| {\bf a}^{rx}_{1\times N}\bF'{\bf x}^{rx}_{N\times 1}{\bf x}^{tx}_{1\times N}\bF' {\bf a}^{tx}_{N \times 1}|
\end{equation} 
where ${\bf a}^{rx}$ and ${\bf a}^{tx}$ are vectors corresponding to a setting of the phase shifters, $ \bF'$ is an $N\times N$ inverse Fourier transform
matrix and ${\bf x}^{rx}$ and ${\bf x}^{tx}$ are sparse vectors representing the angle of arrival at the receiver and the angle of departure at the transmitter respectively. Our goal is to recover  ${\bf x}^{rx}$ and ${\bf x}^{tx}$ from several measurements $y$.

We can reduce this problem to the problem solved in the earlier section. Specifically, we make  $B\times B$ measurements of the form:
\begin{equation}
{\bf Y}_{B\times B} = |{\bf A}^{rx}_{B\times N}\bF'{\bf x}^{rx}_{N\times 1}{\bf x}^{tx}_{1\times N}\bF' {\bf A}^{tx}_{N \times B}|
\end{equation} 
where ${\bf A}^{rx}$ is the phase shift matrix as in Theorem~\ref{t:main}, and ${\bf A}^{tx}$ is its transpose. Let ${\bf A}^{rx}_i$ be the $i$-th column of  ${\bf A}^{rx}$ and ${\bf A}^{tx}_j$ be the $j$-th row of ${\bf A}^{tx}_j$.  Furthermore, let $y_i= \sum_j Y_{i,j}$. We observe that
\begin{eqnarray*}
y_i & = & \sum_j Y_{i,j} \\
& = & \sum_j   |{\bf A}^{rx}_i \bF'{\bf x}^{rx} {\bf x}^{tx} \bF' {\bf A}^{tx}_j| \\
& = & \sum_j  |{\bf A}^{rx}_i \bF'{\bf x}^{rx}| |{\bf x}^{tx} \bF' {\bf A}^{tx}_j| \\
& = & |{\bf A}^{rx}_i \bF'{\bf x}^{rx}|  ( \sum_j |{\bf x}^{tx} \bF' {\bf A}^{tx}_j|) \\
& = & |{\bf A}^{rx}_i \bF'{\bf x}^{rx}| C
\end{eqnarray*}
where $C$ is a constant independent of $i$. Therefore, we can recreate the measurements of the form needed by Theorem~\ref{t:main2} from the $B^2$ measurements provided by the matrix ${\bf Y}$. In this way we can test, for each $i$, whether ${x}^{rx}_i$ is non-zero, and each test is correct with probability at least $2/3$. By repeating the process $L=O(\log N)$ times, we can detect each non-zero entry with high probability. The total number of measurements is $B^2 L = O(K^2 \log N)$.

Finally, while we described the algorithm for 1D antenna arrays,
the algorithm holds for 2D arrays as well. We simply need to apply the hash
function along both dimensions of the array. For an $N\times N$ antenna array,
the complexity will be $O(K^2\log{N^2})$ and hence will continue to
scale logarithmically with the number of antennas in the array.

\subsection{Discussion}
In practice randomizing the phases in the measurements and permutation matrices is not necessary, and therefore in our implementation we set $b=0$ and $t_r=0$. We also drop the assumption that $N$ is prime. Finally, we use soft voting instead of the hard voting. Specifically,  we estimate of the strength of the signal along direction $i$ as $ S (i)=\prod_{l=1}^{L} T_l(i, \rho)$, where $T_l(i, \rho)$ is defined as in Equation~\ref{e:t}, for the $l$-th permutation. The soft voting approach uses more information about the measurements than hard voting, and hence its practical performance is better. We extract the significant coefficients of the signal $x_i$ by selecting the indices $i$ with the largest values of $S(i)$.


\section{\name\ Implementation}
\label{sec:platform}
We have implemented \name\ by designing and building a full-fledged mmWave
radio capable of fast beam steering, which is shown in Fig.\ref{fig:device}. The
radio operates in the new 24GHz ISM band and serves as a daughterboard for the
USRP software radios. Its physical layer supports a full OFDM stack up to 256
QAM. Our implementation addresses critical system and design issues that are
described below.

\begin{figure}[t!]
\centering
	\includegraphics[width=2.6in]{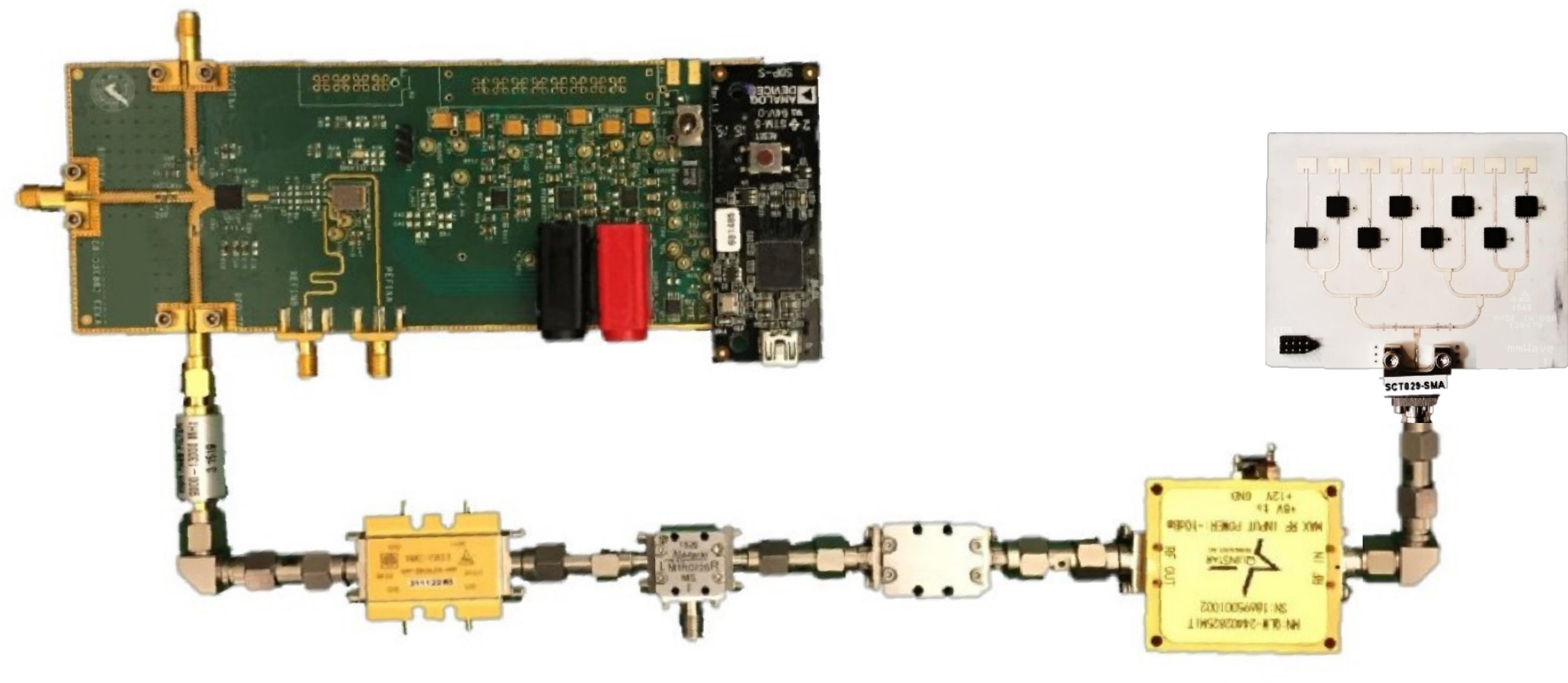}
\vskip -0.1in
\caption{ \textbf{\name\ Platform:} \footnotesize The figure shows the phased array and mmWave radio we built to operate as a daughterboard for the USRP software radio.}
\label{fig:device}
\end{figure}

\begin{figure}[t!]
\begin{tabular}{c}
	\includegraphics[width=3.2in]{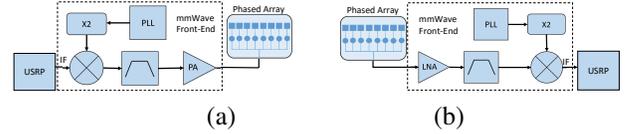}\\
\hspace{0.2in}(a)  \hspace{1in}(b)\\
\end{tabular}
\vskip -0.1in
\caption{ \textbf{\name's Architecture.} \footnotesize The figure shows 
block diagrams for both \name's transmitter(a) and receiver(b).}
\vskip -0.1in
\label{fig:BlockDiagram}
\end{figure}

\vskip 0.06in \noindent  \textbf{(a) Heterodyne Architecture:} 
Millimeter Wave hardware is significantly more expensive than GHz hardware. Thus, we leverage a heterodyne
architecture where the mmWave signal is first taken into an
intermediate frequency of a few GHz, before the I and Q (real \&
imaginary) components are separated. Such a design reduces the number
of components that need to operate at very high frequencies (e.g.,
mixers, filters, etc) and replaces them with components that operate
at a few GHz, which are much cheaper.

The architecture of \name's receiver is shown in
Fig~\ref{fig:BlockDiagram}(b). The first block is a mmWave phased array
which allow us to steer the beam electronically. The array
consists of antenna elements where each element is connected to a phase
shifter. The outputs of the phase shifters are combined and fed
to a single mmWave front-end. The front-end has a standard design of a low-noise amplifier (LNA), band-pass filter, mixer, and a PLL. 
The mmWave front-end down-converts
the mmWave signal to an intermediate frequency (IF) and feeds it to
the daughterboard on the USRP which samples it and passes the digitized
samples to the UHD driver.  This enables easy manipulation of mmWave
signals using standard GNU-radio software and allows us to build an 
OFDM stack that supports up to 256 QAM. 


We have built the design in
Fig.~\ref{fig:BlockDiagram} using off-the-shelf components. For the
mmWave low-noise amplifier (LNA) and power amplifier (PA), we use
Hittite HMC-C020 and Quinstar QLW-2440, respectively. For the mmWave
mixer, we use Marki M1R-0726MS. To generate local oscillator (LO)
signals, we use Analog Devices ADF5355 PLL and Hittite HMC-C035
frequency doubler. The
phased array includes 8 antenna elements separated by
$\frac{\lambda}{2}$, where each element is connected to a Hittite
HMC-933 analog phase shifter. We use Analog Device AD7228
digital-to-analog converters (DAC) and Arduino Due micro-controller
board to digitally control the phase shifters.

\begin{figure}[t!]
\centering
	\includegraphics[width=2.8in]{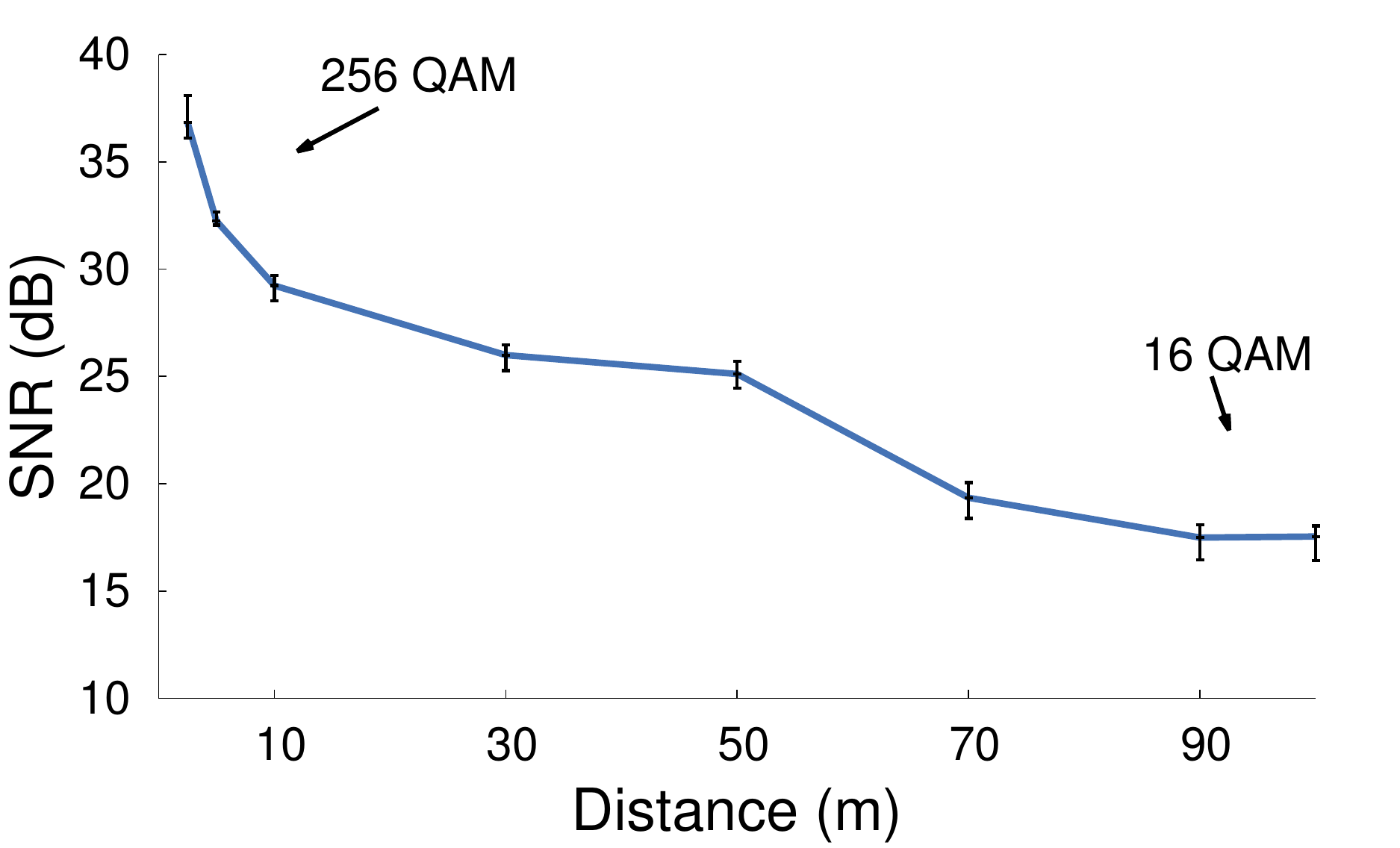}
\vskip -0.1in
\caption{ \textbf{\name\ Coverage.} \footnotesize SNR at the receiver versus distance between the receiver and the transmitter.}
\vskip -0.15in
\label{fig:snrvsrange}
\end{figure}



\vskip 0.06in \noindent  \textbf{(b) Radio Performance:} 
To test \name's ability to deliver high data rates and long range using phased
arrays, we measure the SNR of the received signal for different
distances between \name's receiver and transmitter where the transmit power
complies with FCC part15. Fig.~\ref{fig:snrvsrange} shows the
SNR at the receiver versus the distance between
transmitter and receiver. The figure
shows that \name's implementation provides SNR of more than 30~dB for distances
smaller than 10m and 17~dB even at 100m which is sufficient for relatively dense modulations such as
16 QAM~\cite{TseV:05}.


\section{Experimental Evaluation}\label{sec:results}

We evaluate \name's ability to identify the best beam
alignment quickly and accurately. We ran experiments in a lab
area with standard furniture (desks, chairs, computers, etc.). We also ran
experiments in an anechoic chamber, where we can accurately measure
the ground truth. The anechoic chamber walls are covered with RF
absorbers to eliminate multipath and isolate the space from exterior
interference. This isolation is necessary to measure the ground truth
path traveled by the signal without having RF reflections.  

\subsection{Compared Schemes}
\label{sec:baselines}
We compare the following three schemes:
\begin{Itemize}
\item {\bf Exhaustive Search:} In this approach, for each setting of the transmitter's beam direction, 
the receiver scans all different directions.  The combination of transmitter and receiver beams that delivered the maximum
power is picked as the direction of the signal. 
\item {\bf 802.11ad Standard:} 
The standard has three phases\cite{802.11ad}. The first stage is called {\it Sector Level Sweep (SLS)}. 
In this stage, the AP transmits in all possible directions, and the client sets its receiver beam pattern to a quasi-omnidirectional beam. The process is then repeated with the AP setting its receiver antenna to quasi-omnidirectional and the client sweeping through all transmit directions. At the end of this stage, the AP and client each pick the $\gamma$ directions that deliver the largest power.  The second stage is called {\it Multiple sector ID Detection (MID)}. This stage repeats the process above but with the transmit beam set to quasi-omni-directional and the scan being performed with the receive beam. This stage compensates for imperfections in the quasi omnidirectional beams. The third stage is called {\it Beam Combining (BC)}. In this stage, each of the $\gamma$ best directions at the AP are tried with each of the $\gamma$ directions at the client. Hence, $\gamma^2$ combinations are tested and the combination of transmit and receive beam directions that deliver the maximum power is then selected and used for beamforming during the data transmission. In our experiments, we set $\gamma = 4$. 
\item {\bf \name:} We run the algorithm described in~\xref{sec:alg}, where the total number of measurements is set to $K^2\log{N}$, where $N$ is the number of possible signal directions.  We set $K$ to 4 since past measurement studies show that in mmWave frequencies the channel has only 2 to 3 paths~\cite{rangan2014millimeter,mmw_measurement2,WiMi,BeamSpy}.
\end{Itemize}

\subsection{Evaluation Metrics}

We evaluate the  performance of \name's beam searching algorithm along two axes. The first is the accuracy in detecting the best
alignment of the receiver's and transmitter's beams. In this case, our metric is the SNR loss in comparison to the optimal alignment, \ie, how much SNR could we have gained had we known the ground truth. We calculate this metric by measuring 
the SNR achieved by our beam alignment and subtract it from the SNR achieved by the optimal alignment.
\begin{equation}
SNR_{loss} = SNR_{optimal} - SNR_{\name}
\end{equation}
The lower the SNR loss, the higher is our accuracy in detecting the direction
of the signal.  In order to measure the optimal SNR, we ran experiments in an
anechoic chamber where there is no multipath and we can accurately measure the
ground truth direction of the signal and align the beams along those
directions. However, we also ran experiments in multi-path rich environments.
In this case, since we do not know the ground truth, we compute the SNR loss
metric relative to the exhaustive search baseline described above. 
\begin{equation}
SNR_{loss} = SNR_{Exhaustive} - SNR_{\name}
\end{equation} 

The second metric is the latency in identifying the correct beam alignment in comparison to exhaustive search and the 802.11ad standard.

\subsection{Beam Alignment Accuracy vs. the Ground Truth}

As described above, we first run the experiments in an anechoic chamber, where
 we can accurately measure the ground
truth. For each experiment, we place \name's transmitter and receiver at two
different locations. We then change the orientation of the transmitter's and
receiver's antenna arrays with respect to each other.  Since there
is only a single line-of-sight path in the anechoic chamber, this path will
appear at a different direction at the transmitter and at the receiver depending
on the orientation of the antenna arrays. Hence, this allows us to test any
combination of directions from which the strongest path can leave the
transmitter and arrive at the receiver. For each setting, the transmitter 
transmits measurement frames (as required in 802.11ad) which the receiver uses to 
compute the directions of the best beam alignment. We then
steer the beams based on the output of the alignment and measure the SNR
achieved by this alignment. 


We first show the best alignment detected by \name\ for a particular scenario, where we set the transmitter to be at an angle of
$120^\text{o}$ with respect to the receiver's array, and the receiver at an angle of $50^\text{o}$ with respect to the transmitter's array.
Fig.~\ref{fig:beamres} shows the signal direction recovered by \name\ on the transmitter's and receiver's sides for this scenario. 
As can be seen in the figure, \name\ has accurately identified the direction of arrival of the signal at the receiver as $120^\text{o}$, and the direction of departure of the signal from the transmitter as $50^\text{o}$. 

\begin{figure}[t!]
\centering
\begin{tabular}{cc}
	\includegraphics[width=1.5in]{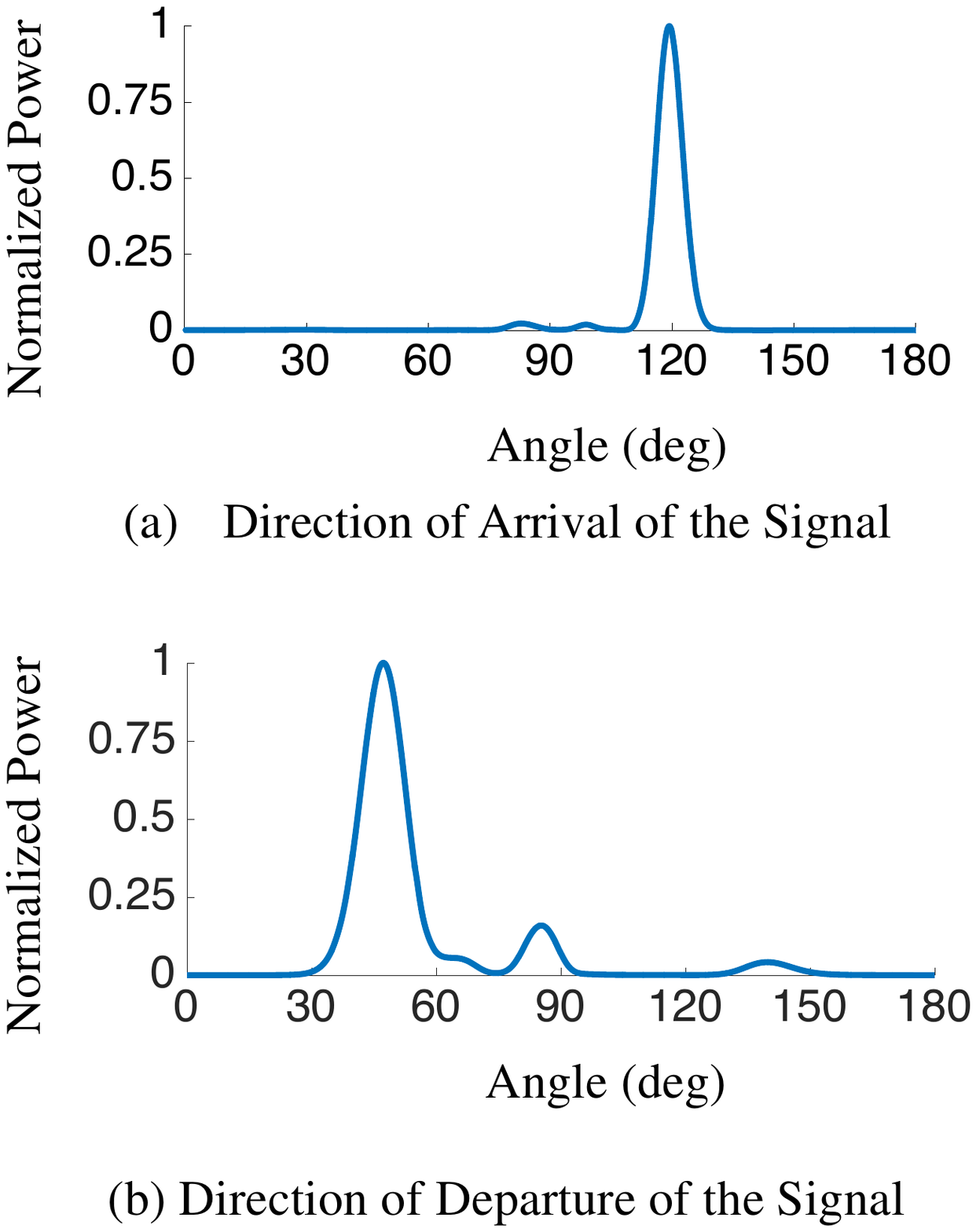} &
	\includegraphics[width=1.5in]{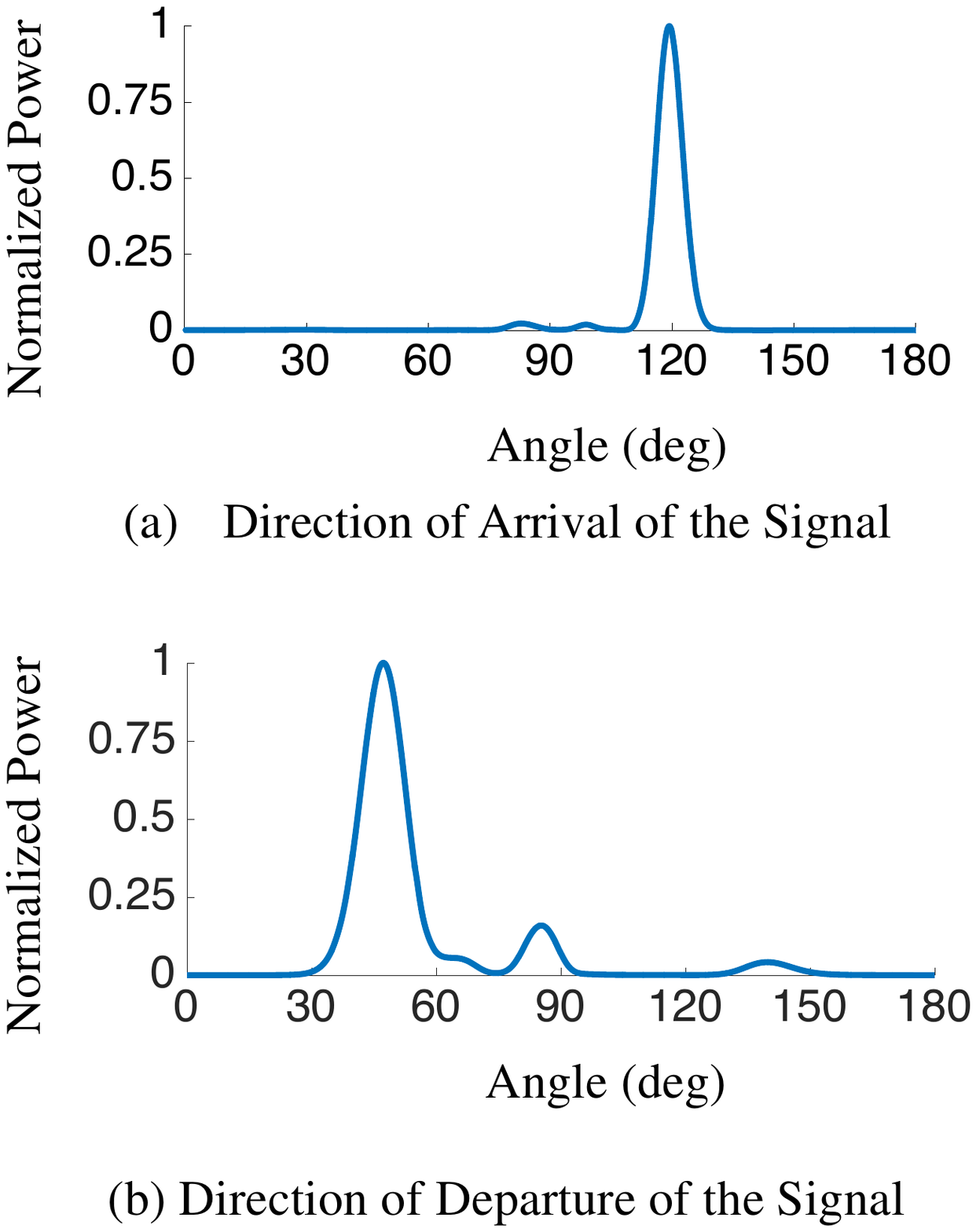}\\
\hspace{0.4in}(a)&
\hspace{0.4in}(b)
\end{tabular}
\vskip -0.1in
\caption{ \textbf{Recovered Directions} \footnotesize The figure shows (a) the direction of arrival that \name\ recovers at the receiver and (b) the direction of departure that it recovers at the transmitter when the transmitter is at a direction of $120^\text{o}$ relative to the receiver's array, and the receiver is at a direction of $50^\text{o}$ relative to the trasnmitter.} 
\vskip -0.1in
\label{fig:beamres}
\end{figure}


Next, we compare \name's algorithm with the exhaustive search and the 802.11ad standard. To do so, we repeat the above experiment by changing the orientation of the transmitter's and receiver's antenna arrays with respect to each other. We try all possible combination of the antennas' orientations, by rotating each antenna array with respect to the other for all angles between $50^\text{o}$ and $130^\text{o}$ with increments of $10^\text{o}$.
Fig. \ref{fig:cdf_SNRloss_measure} plots a CDF of the SNR loss for
\name's beam searching scheme, the exhaustive search and the 802.11ad
standard, in comparison to the optimal alignment. The figure reveals two interesting points:
\begin{Itemize}
\item The figure shows that \name\ performs better than the two baselines in
that it has minimal SNR loss.  While all schemes have a median SNR loss
below 1dB, the $90^{th}$ percentile SNR loss for both exhaustive search and the
standard is 3.95dB which is higher than the 1.89dB SNR loss of \name. This is due to the fact that
the standard and exhaustive search choose to steer using the best beam from a discrete
set of $N$ beams which they tested. However, the space of beam directions is continuous and the best beam may not exactly align with the discretization chosen by the algorithms.  In this case, they will end up picking the closest beam in the discrete set, which may not
be the exact optimal one. SNR loss is further exacerbated by the fact that this can
happen on both sides \ie, the transmitter and the receiver. In contrast, \name\
uses the beams as a continuous weight over the possible choice of directions (Equation~\ref{e:t})
 and picks the direction that maximizes the overall weight, as described
in~\xref{sec:alg}.  Thus, \name\ can discover the direction of the path beyond
the $N$ directions used by exhaustive search and the standard. 
    
\item The figure also shows that the standard and exhaustive search have similar
performance. This might seem surprising since one may expect exhaustive search
to find a better beam alignment since it spends more time searching the space.
However, it is important to recall that the standard differs from the
exhaustive search only in the first stage where it uses a quasi-omnidirectional
beams to limit the search space to a few top candidates. In the final the stage,
however, the standard tries all possible combinations of these candidate beams.
Since there is only one path in this experiment, as long as the best beam is picked 
as one of the candidate beams in the first stage, the standard will converge to
the same beam alignment as the exhaustive search. However, we will show next that
this does not continue to hold in multipath settings. 
\end{Itemize}  

\begin{figure}[t!]
\centering
	\includegraphics[width=2.8in]{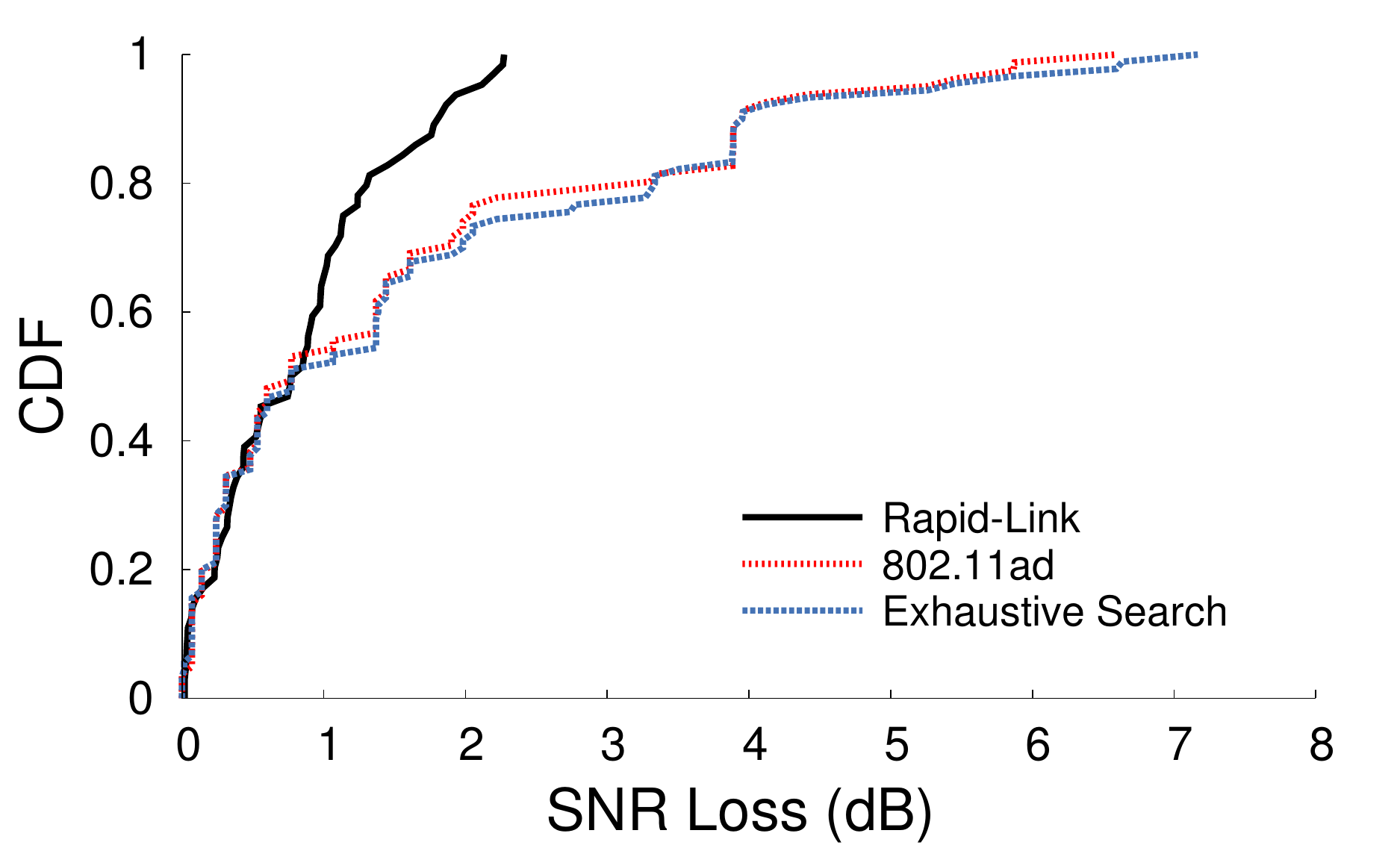}
\vskip -0.1in
\caption{ \textbf{Beam Accuracy with a Single Path} \footnotesize The figure
  shows the SNR loss due to beam misalignment for \name, the 802.11ad standard, and
  exhaustive search.}
\vskip -0.1in
\label{fig:cdf_SNRloss_measure}
\end{figure}

\subsection{Alignment Accuracy in Multipath Environments}
We repeat the above experiments in an office environment, where due to multipath, the signal can arrive
from different directions. However, as described earlier, in this case, we do not have the ground truth for the direction of strongest path and hence we measure the SNR loss relative to the exhaustive search baseline. Note that
since exhaustive search exhaustively tries all possible combinations of directions, it is
not sensitive to multipath and maintains its performance.

Fig. \ref{fig:cdf_SNRloss_multipath} plots a CDF of SNR loss for \name\ and the 802.11ad standard
with respect to the exhaustive search. The figure shows that the
standard performs much worse in multipath scenarios. Specifically,
instead of having a similar SNR to the exhaustive search as before,
the median and $90^{th}$ percentile SNR loss (with respect to exhaustive
search) are 4dB and 12.5dB, respectively. This is because the
standard is using its phased array as a quasi-omnidirectional antenna
and hence the multiple paths can combine destructively, in which case the information is lost. 
Further, due to imperfections in the quasi-omnidirectional patterns, some
paths can get attenuated and hence the standard can easily choose the wrong
direction to align its beam.  In contrast, \name\ performs well even in the
presence of multipath. Specifically, the median and $90^{th}$ percentile
SNR loss with respect to exhaustive search are 0.1dB and 2.4dB, respectively.
Finally, the figure also shows that sometimes \name's SNR loss with respect 
to exhaustive search is negative. This is because in some cases, \name\ performs better that
exhaustive search for the same reasons described above.

\begin{figure}[t!]
\centering
	\includegraphics[width=2.8in]{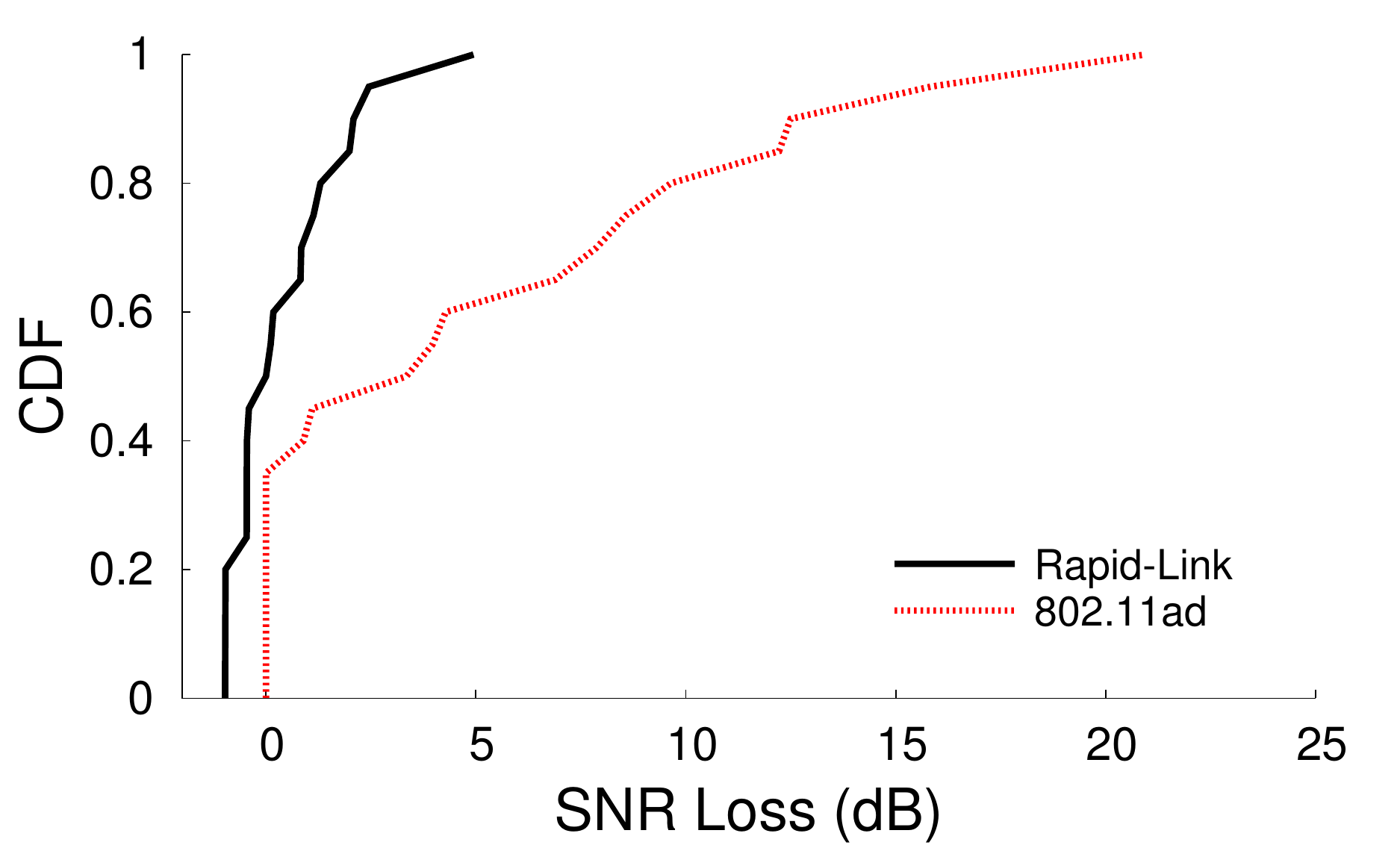}
\vskip -0.1in
\caption{ \textbf{Beam Accuracy with Multipath} \footnotesize The figure
  shows the SNR loss of the 802.11ad standard and \name\ with respect to the exhaustive search.}
\vskip -0.1in
\label{fig:cdf_SNRloss_multipath}
\end{figure}

\subsection{Beam Alignment Latency}
\label{sec:latency}
Next we would like to evaluate the gain in reducing latency that \name\
delivers over the two baselines. However, since our radio has a fixed 
array size we cannot empirically measure how this gain scales for larger arrays.
Hence, we perform extensive simulations to compute this gain for larger arrays
and we use the empirical results from our 8-antenna array to find
the delay for this array size.

\vskip 0.06in\noindent
{\it (a) Reduction in the Number of Measurements:} Since each measurement in 802.11ad requires sending a special frame, one way to measure delay is in terms of the number of measurements frames.  
Fig.~\ref{fig:searchtime_sim} plots the reduction in the number of measurements that \name\
achieves over exhaustive search and the standard. The figure shows that, for an
8-antenna phased array, \name\ can reduce the number of measurements by $7\times$ and
$1.5\times$ compared to exhaustive search and standard respectively. Further, the gain
 increases quickly as the number of antennas increase. This is due to the scaling property of each algorithm and whether it is quadratic, linear, or logarithmic. The figure shows that the gain of \name\ over exhaustive search and the standard increases very fast and for
arrays of size 256 is 16.4$\times$ better than the standard and three orders of
magnitude better than exhaustive search. 

\begin{figure}[t!]
\centering
	\includegraphics[width=2.8in]{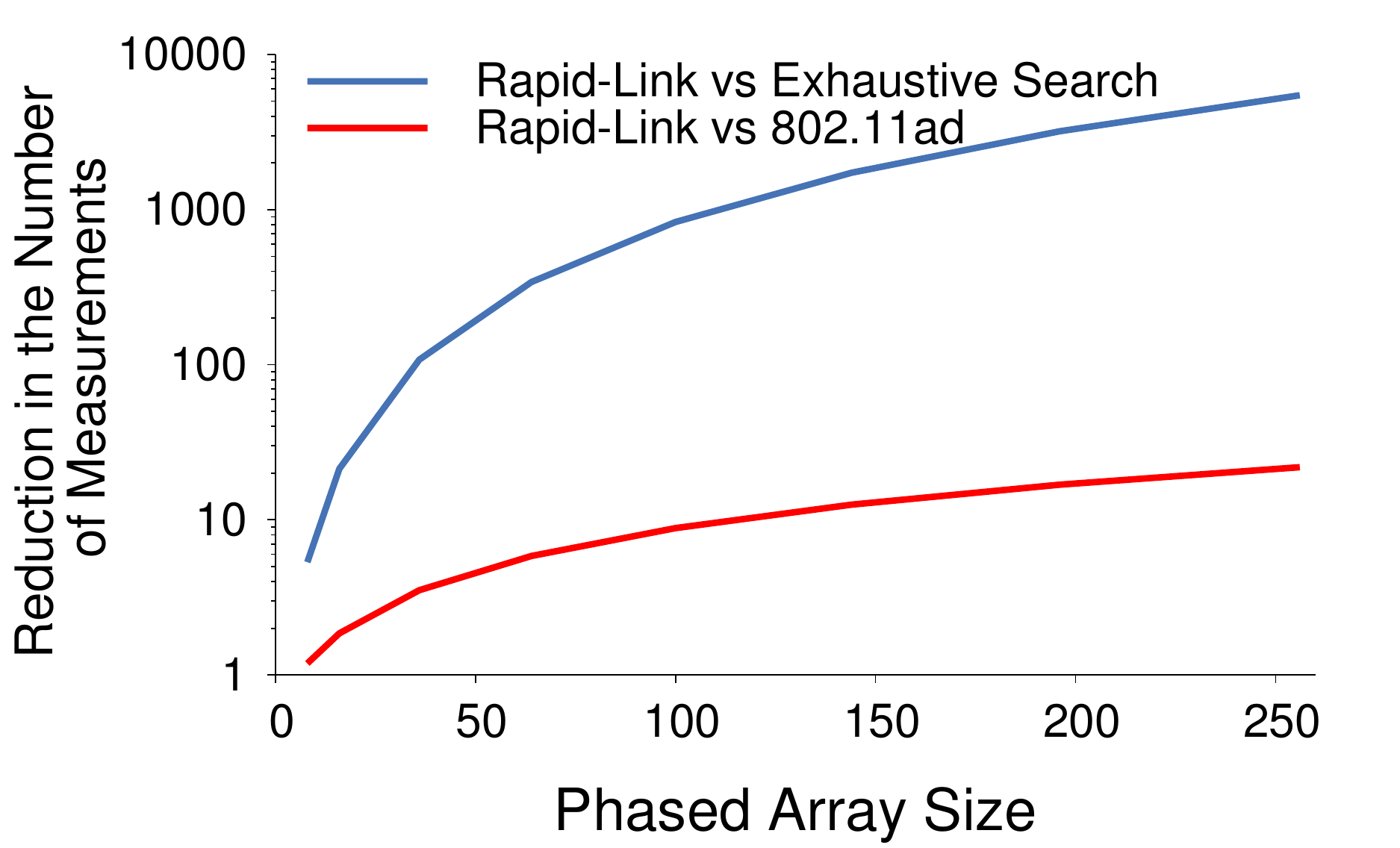}
\vskip -0.1in
\caption{ \textbf{Beam Alignment Latency} \footnotesize The figure
plots the reduction in the number of measurements for \name\ compared to the 802.11ad standard and 
exhaustive search.}
\vskip -0.1in
\label{fig:searchtime_sim}
\end{figure}

\vskip 0.06in\noindent
{\it (b) Reduction in Search Time:} 
Next, we look at the amount of time it takes to find the best alignment in each scheme under the 802.11ad MAC protocol. The standard is still evolving; our description is based on~\cite{802.11ad}.  Since the delays in exhaustive search are unacceptable in practice, we consider only \name\ and the beam alignment scheme in the standard.   

The 802.11ad has a protocol for when the AP and clients search the space to align their beams~\cite{802.11ad,802.11ad_paper}. The delay produced through this process differs from simply multiplying the number of measurement frames by the duration of each measurement. This is due to three  main reasons: 1) The protocol allows for beam scan (called beam training) only during certain intervals. If the client cannot collect all necessary measurements, it needs to wait until the next opportunity to perform more measurements. 2) Different clients contend for the beam alignment slots; hence, the delay will increase depending on the number of clients. 3) When the AP sweeps its beam, all clients can collect measurements; hence this part can be amortized. 

To better understand the above constraints, let us describe at a high-level how 802.11ad performs beam-forming training. The AP periodically transmit beacon intervals (BI), which typically last for 100~ms~\cite{802.11ad_paper}. Each BI has a beacon header intervals (BHI), followed by a data transmission interval (DTI), as shown in Fig.~\ref{fig:BI}. The search for the best alignment is done during the BHI.  Each BHI consists of one beacon transmission interval (BTI) which is used by the AP to train its antenna beam, and eight association beam-forming training (A-BFT) slots, which are randomly selected by clients to train their beams. Finally, each A-BFT slot consists of up to 16 SSW frames, where each frame is used to perform one measurement and has a duration of $15.8 \mu s$~\cite{802.11ad,802.11ay}. Each BI has a maximum of 8 beam training slots. All clients contend for training in those slots.  If the client cannot finish its training during one A-BFT, it can contend for further slots during the same or following BI. Yet, waiting for the next BI increases the delay by 100ms. 

As explained in~\xref{sec:baselines}, 802.11 performs beam refinement where each of $\gamma$ best directions at the AP and client are compared again. To simplify the computation, we conservatively ignore the 802.11ad beam refinement since it only increases the delay of 802.11ad, and improves the relative gains of \name. Also when simulating 4 client, we assume that the contention succeeded without collision. This too is a conservative assumption since \name\ requires significantly fewer measurement slots and hence, given the same number of slots, the collision probability between clients is smaller in \name. Finally, the AP trains its beam during the BTI, and uses frames similar to those used for the client beam training. The AP does not need to repeat this training per client. 

\begin{figure}[t!]
\centering
	\includegraphics[width=1.6in]{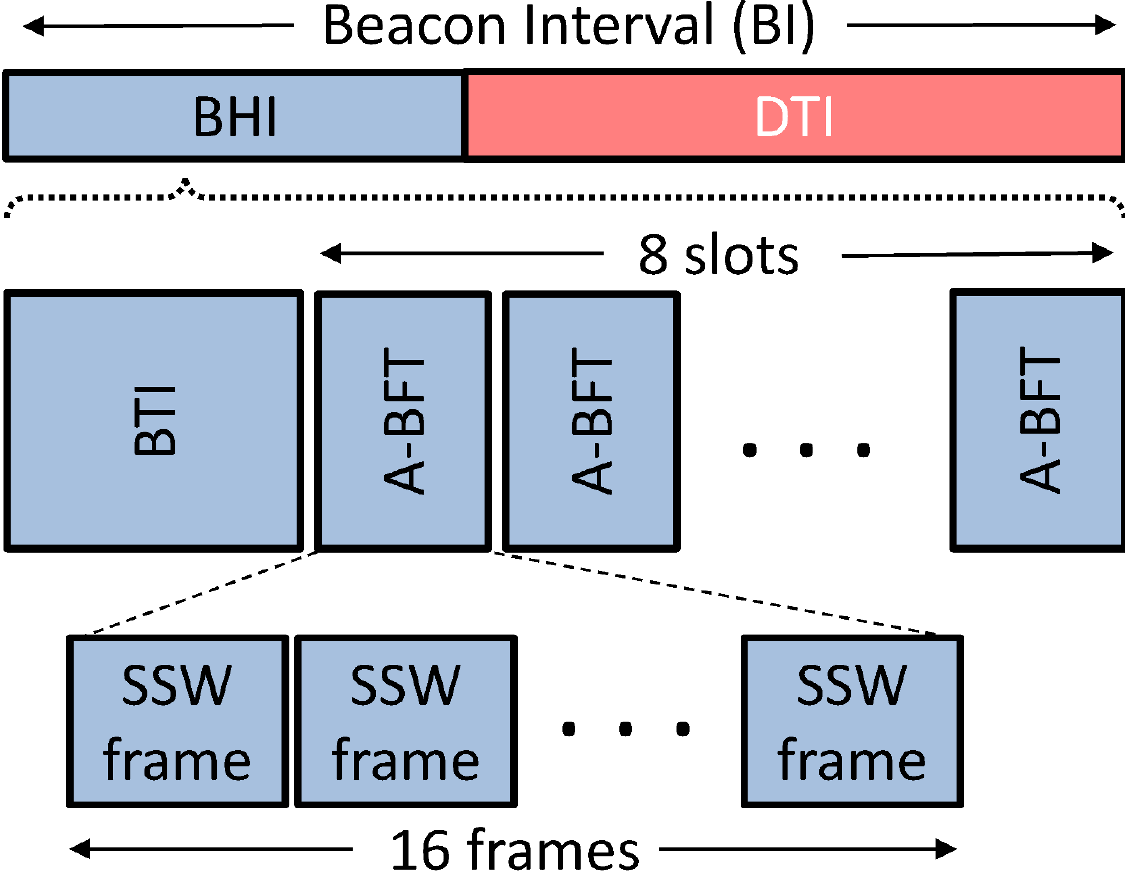}
\vskip -0.15in
\caption{ \footnotesize\textbf{802.11ad Beacon Interval Structure}}
\label{fig:BI}
\end{figure}
 

Table~\ref{tab:latency} shows the beam alignment delay for different antenna-array sizes, both for the case of one client and 4 clients. The table shows that as the number of antennas in the array increases, the delay in 802.11ad increases quickly. In contrast, \name\ can operate within the same standard, but it extracts more information from each measurement, hence keeping the the delay low even for large antenna arrays.   In particular, for antenna arrays of 256 elements, the proposed 802.11 beam alignment algorithm takes hundreds of milliseconds for one user and over 1.5 seconds for 4 users. In contrast, \name\ keeps the delay below 1.01ms and 2.53ms, respectively.

\begin{table}
	\centering
{\small
	\begin{tabular}{|c|c|c|c|c|}
	\hline
	\multirow{3}{*}{Size} & \multicolumn{2}{c|}{One Client} & \multicolumn{2}{c|}{Four Clients} \\
	\cline{2-5} & 802.11ad & \name & 802.11ad & \name \\
	\hline
	 8 & 0.51ms & 0.44ms & 1.27ms & 1.20ms\\
	\hline
	 16 & 1.01ms & 0.51ms & 2.53ms & 1.26ms\\
	\hline
 	 64 & 4.04ms & 0.89ms & 304.04ms & 2.40ms\\
	\hline
	 128 & 106.07ms & 0.95ms & 706.07ms & 2.46ms\\
	\hline
	 256 & 310.11ms & 1.01ms & 1510.11ms & 2.53ms \\
	\hline
\end{tabular}}
\vskip -0.1in
\caption{\footnotesize{\bf Beam alignment latency for different array size}}
\vskip -0.1in
\label{tab:latency}
\end{table}

\section{Conclusion}
This paper presents \name, a phased array mmWave system that can find the best beam alignment without scanning the entire space. \name\ delivers the first mmWave beam alignment algorithm with provably logarithmic measurements for the phased-array architecture commonly used in mmWave access points and clients. Our algorithm finds the correct alignment of the beams between a transmitter and a receiver orders of magnitude faster than existing radios. In particular, it reduces the alignment delay from over a second to 2.5ms. We believe \name\ brings us closer towards practical mmWave networks.

\begin{appendices}
\appendix




\section{Notation and Preliminaries}

\noindent {\bf (a) Basic notation}
\begin{Itemize}
\item We use  $[N]$ to denote $\{0 \ldots N-1\}$.
\item We use $S$ to denote the support of ${\bf x}$.
\item We use ${\bf F}$ to denote the Fourier transform matrix, and ${\bf F'}$ to denote the inverse Fourier transform matrix. Also, we use ${\bf F}_i$ to denote the $i$-th row of ${\bf F}$; same for ${\bf F'}$. Finally, we use $\hat{{\bf x}}$ to denote ${\bf F}{\bf x}$.
\item  For two vectors ${\bf x}$ and ${\bf y}$, we define the {\em Hadamard product} $\circ$ of ${\bf x}$ and ${\bf y}$ as $({\bf x} \circ {\bf y})_i = x_i y_i$. We will use this notion to mask out the coefficients outside of a given segment. 
\item We use ${\bf x * y}$ to denote the convolution of ${\bf x}$ and ${\bf y}$.
\item The vectors ${\bf e}_0 \ldots {\bf e}_{N-1}$ denote the standard basis. I.e., $(e_p)_i=1$ for $p=i$ and $(e_p)_i=0$ otherwise.  
 \end{Itemize}

 \vskip 0.06in\noindent {\bf (b) Measurements and box car filter}\newline
 Our measurements can be described using the notion of the {\em boxcar filter}, defined as follows. For parameter $P$, let ${\bf H}$ be such that $H_i=\frac{\sqrt{N}}{P-1}$ if $| i | \le P/2$ and $H_i=0$ otherwise. It is known that 
 $ \hat{H}_j = \frac{1}{P-1} \frac{sin(\pi(P -1) j/N)}{(P-1) \sin (\pi j/N)} $.
 
\begin{proposition}
\label{p:H}
${\bf \hat{H}}$ satisfies the following properties: (i) $\hat{H}_0 = 1$; (ii)  $\hat{H}_j \in [\frac{1}{2 \pi}, 1]$ for $|j| \le \frac{N}{2P}$; (iii)  $|\hat{H}_j| \le \frac{2}{1+|j| P/N}$ if $P \ge 3$.
\end{proposition}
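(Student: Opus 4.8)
The plan is to reduce all three parts to elementary trigonometric estimates on the (normalized) Dirichlet kernel. Writing $\theta = \pi j/N$ and $m = P-1$, the stated expression is $\hat H_j = \frac{\sin(m\theta)}{m\sin\theta}$, and I read $j$ as its representative in $(-N/2,N/2]$, so that $\theta \in (-\pi/2,\pi/2]$; since the expression is even in $j$ I may assume $j \ge 0$. Two classical inequalities will do essentially all the work: Jordan's inequality $\frac{2}{\pi}x \le \sin x \le x$ for $x \in [0,\pi/2]$, and the multiple-angle bound $|\sin(m\theta)| \le m\,|\sin\theta|$ for every integer $m \ge 1$. The latter I would prove by induction from the angle-addition formula, since $|\sin((k+1)\theta)| \le |\sin(k\theta)\cos\theta| + |\cos(k\theta)\sin\theta| \le |\sin(k\theta)| + |\sin\theta| \le (k+1)|\sin\theta|$.

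For (i) I simply take the limit $\theta \to 0$: numerator and denominator both vanish linearly, and $\frac{\sin(m\theta)}{m\sin\theta} \to \frac{m\theta}{m\theta} = 1$, so $\hat H_0 = 1$. For (ii), when $|j| \le N/(2P)$ we have $0 \le \theta \le \pi/(2P)$ and hence $0 \le m\theta = \pi(P-1)j/N < \pi/2$; thus $\sin\theta$ and $\sin(m\theta)$ are both nonnegative and $\hat H_j \ge 0$. The upper bound $\hat H_j \le 1$ is immediate from the multiple-angle inequality. For the lower bound I apply Jordan's inequality to the numerator, $\sin(m\theta) \ge \frac{2}{\pi}m\theta$, and $\sin\theta \le \theta$ to the denominator, giving $\hat H_j \ge \frac{(2/\pi)\,m\theta}{m\theta} = \frac{2}{\pi}$, which is comfortably above the claimed $\frac{1}{2\pi}$.

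For (iii), the global bound $|\hat H_j| \le 1$ again follows from the multiple-angle inequality. To extract decay I bound the numerator crudely by $|\sin(m\theta)| \le 1$ and the denominator from below by Jordan's inequality $\sin\theta \ge \frac{2}{\pi}\theta = \frac{2|j|}{N}$ (valid since $\theta \in [0,\pi/2]$), obtaining $|\hat H_j| \le \frac{N}{2(P-1)|j|}$. It then remains to fold the two estimates into the single envelope $\frac{2}{1+|j|P/N}$. Setting $t = |j|P/N$, the global bound gives $|\hat H_j| \le 1 \le \frac{2}{1+t}$ whenever $t \le 1$, while the decay bound rewrites as $|\hat H_j| \le \frac{P}{2(P-1)\,t} \le \frac{3}{4t} \le \frac{2}{1+t}$ whenever $t \ge 3/5$, where the middle inequality uses $P \ge 3$ (so $\frac{P}{P-1} \le \frac{3}{2}$) and the last is a cross-multiplication. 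Since $[0,1] \cup [3/5,\infty)$ covers all nonnegative $t$, the bound holds for every $j$.

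The only genuinely fiddly step is (iii): neither the trivial bound nor the $1/|j|$ bound alone yields the stated envelope, so the argument must switch between them and I must verify that the two regimes overlap with no gap. This is precisely where the hypothesis $P \ge 3$ enters, since it keeps the constant $P/(P-1)$ small enough that the crossover thresholds ($t = 3/5$ for the decay bound and $t = 1$ for the trivial bound) leave the whole range $t \ge 0$ covered. Parts (i) and (ii) are routine once the two trigonometric inequalities are in hand.
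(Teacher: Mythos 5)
Your proof is correct, and there is in fact nothing in the paper to compare it against: the paper asserts Proposition~\ref{p:H} without proof, treating (i)--(iii) as known properties of the boxcar (Dirichlet) kernel. Your argument therefore supplies what the paper omits, and it does so with exactly the right minimal toolkit: Jordan's inequality $\frac{2}{\pi}x \le \sin x \le x$ on $[0,\pi/2]$ and the multiple-angle bound $|\sin(m\theta)| \le m|\sin\theta|$ (your induction for the latter is sound). The case analysis in (iii) is the only delicate point and you handle it correctly: the trivial bound $|\hat{H}_j| \le 1$ covers $t = |j|P/N \le 1$, the decay bound $|\hat{H}_j| \le \frac{P}{2(P-1)t} \le \frac{3}{4t}$ covers $t \ge 3/5$, the two regimes overlap, and the hypothesis $P \ge 3$ enters precisely where you say it does (to get $\frac{P}{P-1} \le \frac{3}{2}$). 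Two remarks. First, in (ii) you actually prove the stronger lower bound $\hat{H}_j \ge \frac{2}{\pi}$, which of course implies the stated $\frac{1}{2\pi}$; the paper's weaker constant is presumably there to absorb normalization slop elsewhere, but your sharper constant is perfectly compatible with all downstream uses. Second, you silently (and correctly) repaired a typo: the paper's displayed formula carries the factor $\frac{1}{P-1}$ twice, which would force $\hat{H}_0 = \frac{1}{P-1}$ and contradict part (i); the normalization $\hat{H}_j = \frac{\sin(\pi(P-1)j/N)}{(P-1)\sin(\pi j/N)}$ that you adopt is the one consistent with the proposition, and it would be worth stating that correction explicitly rather than leaving it implicit in your choice of $m = P-1$.
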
 

\begin{claim}
\label{c:C}
\[ \|{\bf \hat{H}} \|_2^2 = \sum_j |\hat{H}_j|^2 \le  1+2N/P \sum_j 1/|j|^2 \le  C \frac{N}{P}  \] 
for some constant $C$.
\end{claim}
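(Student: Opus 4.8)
The plan is to split off the DC coefficient and then control the remaining mass using only the pointwise decay estimate of Proposition~\ref{p:H}. First I would write $\sum_j |\hat{H}_j|^2 = |\hat{H}_0|^2 + \sum_{j\neq 0}|\hat{H}_j|^2$ and invoke part~(i) to replace the first term by exactly $1$, which produces the leading $1$ in the claimed bound. The whole problem then reduces to showing $\sum_{j\neq 0}|\hat{H}_j|^2 = O(N/P)$.

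The crucial point is that one must \emph{not} sum the pointwise bound term by term. Part~(iii) gives $|\hat{H}_j| \le \frac{2}{1+|j|P/N} \le \frac{2N}{|j|P}$, hence $|\hat{H}_j|^2 \le \frac{4N^2}{P^2|j|^2}$; but $\sum_{j\neq 0}\frac{4N^2}{P^2|j|^2} = \Theta(N^2/P^2)$, which overshoots the target by a factor of $N/P$. The remedy, which is exactly what the displayed middle expression encodes, is to group the frequencies by level. I would partition the nonzero $j$ into blocks $B_k = \{\, j : (k-1)N/P \le |j| < kN/P \,\}$ for $k=1,2,\dots$. On $B_k$ we have $1+|j|P/N \ge k$, so $|\hat{H}_j|^2 \le 4/k^2$, while $|B_k| \le 2N/P + O(1)$ because $|j|$ ranges over an interval of length $N/P$ on each side of the origin. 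Each block therefore contributes $O\left(\frac{N}{P}\cdot\frac{1}{k^2}\right)$, and summing over $k$ yields $\sum_{j\neq 0}|\hat{H}_j|^2 = O\left(\frac{N}{P}\sum_{k\ge 1}\frac{1}{k^2}\right) = O(N/P)$ since $\sum_k k^{-2}$ converges. This reproduces the middle quantity $1+\frac{2N}{P}\sum 1/|j|^2$, with the index of that sum read as the block level $k$ and the constant absorbed, and gives the final $CN/P$ for an absolute $C$. The only step needing genuine care, and the place where the argument lives or dies, is keeping the per-block frequency count at $O(N/P)$ and the per-frequency mass at $O(1/k^2)$ simultaneously; any looser bookkeeping reintroduces the spurious $N^2/P^2$.

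As a sanity check I would also note the slicker route via Parseval: if ${\bf F}$ is the unitary DFT then $\|\hat{H}\|_2^2 = \|H\|_2^2 = |\{i : |i| \le P/2\}|\cdot \frac{N}{(P-1)^2} = \Theta(N/P)$, since $H$ has $\Theta(P)$ nonzero entries each of magnitude $\sqrt{N}/(P-1)$. This sidesteps Proposition~\ref{p:H} entirely and gives an explicit small constant, but because the claim displays the frequency-side computation I would present the grouping argument as the primary proof.
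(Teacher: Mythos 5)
Your proof is correct and takes essentially the same approach as the paper, which states Claim~\ref{c:C} without a separate proof: the chain of inequalities in the claim itself is the argument, with the leading $1$ being the DC term from Proposition~\ref{p:H}(i) and the middle expression $1+2N/P\sum_j 1/|j|^2$ being exactly your block decomposition (about $2N/P$ frequencies per level $k$, each with squared mass $O(1/k^2)$, summed over levels). Your warning that the naive term-by-term bound $|\hat{H}_j|\le 2N/(|j|P)$ overshoots to $\Theta(N^2/P^2)$ is the right point of care, and your Parseval aside $\|\hat{H}\|_2^2=\|{\bf H}\|_2^2=\Theta(N/P)$ is a valid (and slicker) alternative, modulo the paper's DFT normalization.
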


We also define a {\em shifted} version of ${\bf H}$ defined as $(H^t)_i = H_{i-t}$. By the time-shift theorem it follows that $|\hat{H}^t_i| = \hat{H}_i$.

Using this notation, we can write each measurement ${\bf a}^b$ as

\[ {\bf a}^b = \sum_{r=0}^{R-1} ( {\bf F}_{s^r_b} \circ {\bf H}^{r N/R} ) \omega^{t_r^b} \]

Each segment of ${\bf a^b}$, when multiplied by a row of the matrix ${\bf F'}$, can be interpreted as follows. 

\begin{claim}
\label{c:measurement}
$({\bf F}_i \circ {\bf H}) \cdot {\bf F'}_p = \hat{H}_{i-p}$
\end{claim}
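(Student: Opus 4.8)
The plan is to reduce both sides to a single coordinate sum and match them term by term. First I would expand the left-hand side using the definitions of the Hadamard product and of the inner product against a row of $\mathbf{F'}$, writing
\[ (\mathbf{F}_i \circ \mathbf{H}) \cdot \mathbf{F'}_p = \sum_{j \in [N]} (\mathbf{F}_i)_j \, H_j \, (\mathbf{F'}_p)_j . \]
This turns the statement into a purely entrywise question about the Fourier matrices, stripping away the masking interpretation and leaving a sum that looks like a transform of $\mathbf{H}$.

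The heart of the argument is then the pointwise identity $(\mathbf{F}_i)_j (\mathbf{F'}_p)_j = (\mathbf{F}_{i-p})_j$, where the difference $i-p$ is read modulo $N$. To establish it I would substitute the explicit entries of the forward and inverse transforms, which are proportional to $\omega^{ij}$ and to $\omega^{-pj}$ respectively, and use $\omega^{ij}\omega^{-pj} = \omega^{(i-p)j}$ so that the product collapses to the $j$-th entry of row $i-p$ of $\mathbf{F}$. This is exactly the modulation/frequency-shift phenomenon: multiplying row $i$ of $\mathbf{F}$ coordinatewise by row $p$ of $\mathbf{F'}$ re-indexes the frequency by $-p$.

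Plugging the identity back into the sum yields $\sum_{j} (\mathbf{F}_{i-p})_j H_j = (\mathbf{F}\mathbf{H})_{i-p}$, which is precisely $\hat{H}_{i-p}$ by the definition $\hat{\mathbf{H}} = \mathbf{F}\mathbf{H}$. This closes the chain from the left-hand side to the claimed right-hand side.

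The step I expect to be the main obstacle is pinning down the normalization constants so that the pointwise identity holds exactly, with no stray factor of $1/\sqrt{N}$ or $1/N$ surviving. Because $\mathbf{F}$ and $\mathbf{F'}$ form a forward/inverse pair, their scaling factors are linked, and I must verify that the convention in force is the one for which $(\mathbf{F}_i)_j(\mathbf{F'}_p)_j$ reproduces $(\mathbf{F}_{i-p})_j$ on the nose rather than up to a constant; as a sanity check I would confirm consistency with $\hat{H}_0 = 1$ from Proposition~\ref{p:H}. A secondary point to handle carefully is that all index arithmetic is modulo $N$, so that $i-p$ and the exponent $\omega^{(i-p)j}$ are well defined as indexing a genuine row of $\mathbf{F}$.
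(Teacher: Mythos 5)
Your proof is correct, and it takes a genuinely different route from the paper's. You expand the inner product coordinatewise and invoke the modulation identity $({\bf F}_i)_j({\bf F'}_p)_j = ({\bf F}_{i-p})_j$, which collapses the sum to $({\bf F}{\bf H})_{i-p} = \hat{H}_{i-p}$. The paper argues structurally instead: it moves the mask onto the other factor, $({\bf F}_i \circ {\bf H})\cdot {\bf F'}_p = {\bf F}_i \cdot ({\bf F'}_p \circ {\bf H})$, recognizes this as $\widehat{({\bf F'}_p \circ {\bf H})}_i$, applies the convolution theorem to get $(\widehat{{\bf F'}_p} * \hat{{\bf H}})_i$, uses $\widehat{{\bf F'}_p} = {\bf e}_p$, and finishes because convolving with ${\bf e}_p$ is a shift by $p$. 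The two routes rest on the same underlying fact, and the normalization issue you single out as the main obstacle is present in both, just dressed differently: your pointwise identity holds on the nose only if ${\bf F'}$ has entries $\omega^{-pj}$ with no $1/\sqrt{N}$ factor (equivalently, ${\bf F}$ carries the full $1/N$), and that is exactly the convention under which the prefactor-free convolution theorem $\widehat{u \circ v} = \hat{u} * \hat{v}$ invoked by the paper is valid; under the unitary convention both arguments, and indeed the claim itself, acquire a stray $1/\sqrt{N}$. One caution about your proposed sanity check: since $H$ has amplitude $\sqrt{N}/(P-1)$, the property $\hat{H}_0 = 1$ of Proposition~\ref{p:H} actually forces the unitary scaling, so the check would surface an inconsistency in the paper's bookkeeping rather than settle the convention --- a defect of the paper, not of your argument. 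As for what each approach buys: yours is elementary, self-contained, and forces the scaling into the open; the paper's is coordinate-free and exposes the conceptual mechanism --- masking in time is convolution by $\hat{{\bf H}}$ in frequency, so testing against ${\bf F'}_p$ reads off $\hat{{\bf H}}$ recentered at $p$ --- which is the same viewpoint that handles the shifted filters ${\bf H}^{rN/R}$ appearing in the measurements.
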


\begin{proof}
\[ ({\bf F}_i \circ {\bf H}) \cdot {\bf F'}_p
 = {\bf F}_i \cdot ( {\bf F'}_p \circ {\bf H}) 
 = \widehat{({\bf F'}_p \circ {\bf H})}_i 
 = (\hat{{\bf F'}_p} * \hat{{\bf H}})_i 
 = ({\bf e}_p *  \hat{{\bf H}})_i 
 = \hat{H}_{i-p}
\]
\end{proof}

 \vskip 0.06in\noindent {\bf (c) Pseudo-random permutations} \newline
 We will use matrices ${\bf P}_{\rho,b}$ parameterized by mappings $\rho$ of the form      $\rho(i)=\sigma^{-1} i +a \mod N$ for $\sigma, a, b \in [N]$ such that 
\begin{Itemize}
\item $({\bf P}_{\rho,b}{\bf x})_{\rho(i)} = x_{i} \omega^{\tau(j)}$ for $\tau(j)=b(j+\sigma a)$
\item  ${\bf F' P}_{\rho,b}  = {\bf P'}_{\rho,b} {\bf F}$ for ${\bf P'}_{\rho,b}$ as defined in Section~\ref{sec:alg}. 
\end{Itemize}
Note that $\tau$ is a permutation assuming $b$ is invertible mod $N$. 
We use ${\cal R}$ to denote the set of all mappings $\rho$ as defined above. 
For the analysis, we will assume that $N$ is prime. This will ensure that the elements $\rho \in {\cal R}$ are permutations. Furthermore, in this case ${\cal R}$ is {\em pairwise independent}, i.e., for any $i \neq j, i' \neq  j'$, we have 
\[Pr_{\rho \in {\cal R}} [\rho(i)=i', \rho(j)=j'] = 1/N^2 \]

It will be convenient to assume $\|{\bf x}\|^2_2=1$.  Then we can define the threshold $T$ to be $ ( \frac{1}{2(2 \pi)} -  \frac{1}{8 \pi})^2 (\frac{1}{2(2 \pi)})^2 /K   $.

\section{Proofs}

\begin{lemma}
\label{c:tail}
Fix $b$ and select $\rho \in {\cal R}$ uniformly at random. Then, for any $s$:
\[  E[ I(b,\rho(s))] = E\left[ |{\bf a}^b {\bf F'}_{\rho(s)}|^2 \right ] \le  C R/P  \]
where $C$ is the constant from Claim~\ref{c:C}.
\end{lemma}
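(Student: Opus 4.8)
The plan is to expand the measurement ${\bf a}^b$ using its definition, dot it with ${\bf F'}_{\rho(s)}$, apply Claim~\ref{c:measurement} to turn each term into a shifted Fourier coefficient $\hat{H}$, and then take the expectation over the random permutation $\rho$.

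Let me think about the structure here. We have
\[ {\bf a}^b = \sum_{r=0}^{R-1} ( {\bf F}_{s^r_b} \circ {\bf H}^{r N/R} ) \omega^{t_r^b}. \]
So dotting with ${\bf F'}_{\rho(s)}$ and using Claim~\ref{c:measurement} (in its shifted form, where $({\bf F}_i \circ {\bf H}^t)\cdot {\bf F'}_p = \hat{H}^t_{i-p}$) gives
\[ {\bf a}^b {\bf F'}_{\rho(s)} = \sum_{r=0}^{R-1} \hat{H}^{rN/R}_{s^r_b - \rho(s)} \, \omega^{t_r^b}. \]
The quantity $I(b,\rho(s))$ is the squared magnitude of this sum.

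First I would write out $|{\bf a}^b {\bf F'}_{\rho(s)}|^2$ as a double sum over $r$ and $r'$. I expect the main simplification to come from how $\rho$ is chosen: the shifts $s^r_b$ place the filter windows at positions determined by the permutation, and the key point should be that the windows for distinct $r$ are (essentially) disjoint or that the cross terms cancel in expectation. The cleanest route is to argue that in expectation the cross terms ($r \neq r'$) vanish — either because the phases $\omega^{t_r^b}$ are set up so that averaging over $\rho$ (which controls how $\rho(s)$ lands relative to each window) kills them, or because the supports of the relevant filter responses do not overlap. That would reduce the expectation to the diagonal $\sum_r E[|\hat{H}^{rN/R}_{s^r_b - \rho(s)}|^2] = \sum_r E[|\hat{H}_{s^r_b - \rho(s)}|^2]$, using $|\hat{H}^t_i| = \hat{H}_i$.

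Next I would bound each diagonal expectation. Since $\rho(s) = \sigma^{-1} s + a \bmod N$ and $a$ is uniform over $[N]$ (with $N$ prime), the value $s^r_b - \rho(s) \bmod N$ is uniformly distributed over $[N]$ for each fixed $r$. Therefore
\[ E\left[ |\hat{H}_{s^r_b - \rho(s)}|^2 \right] = \frac{1}{N} \sum_{j} |\hat{H}_j|^2 = \frac{1}{N}\|{\bf \hat{H}}\|_2^2 \le \frac{1}{N} \cdot C\frac{N}{P} = \frac{C}{P}, \]
where the inequality is exactly Claim~\ref{c:C}. Summing over the $R$ values of $r$ then yields the claimed bound $C R/P$.

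The hard part will be justifying the vanishing of the cross terms rigorously — I need to confirm that the construction of the $s^r_b$ and the phases $t_r^b$ (defined in the algorithm section, which is elided here) genuinely makes $E[\hat{H}^{rN/R}_{s^r_b-\rho(s)} \overline{\hat{H}^{r'N/R}_{s^{r'}_b-\rho(s)}} \, \omega^{t_r^b - t_{r'}^b}] = 0$ for $r \neq r'$, rather than merely being small. If the windows are genuinely disjoint (the filter ${\bf H}$ is supported on $|i| \le P/2$ and the $R$ windows are spaced $N/R$ apart), then at most one window's response can be nonnegligible at any fixed index, which would make even the pointwise products of filter responses zero and bypass any phase argument entirely; I would check first whether $P < N/R$ guarantees this clean disjointness, since that is the simplest and most likely intended justification.
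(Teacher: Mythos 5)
Your expansion of ${\bf a}^b {\bf F'}_{\rho(s)}$ via Claim~\ref{c:measurement} and your treatment of the diagonal terms --- uniformity of $s^r_b - \rho(s)$ over $[N]$ followed by Claim~\ref{c:C} --- coincide exactly with the paper's proof. The gap is the off-diagonal terms, which you yourself flag as unresolved, and the justification you say you would check first (disjointness of the windows) cannot work. The disjointness you have in mind lives in the \emph{time} domain: the boxcars ${\bf H}^{rN/R}$ have essentially disjoint supports since $P = N/R$. But the terms entering your double sum are the \emph{Fourier-domain} responses $\hat{H}^{rN/R}_{s^r_b - \rho(s)}$, whose magnitudes equal $\hat{H}_{s^r_b - \rho(s)}$, and $\hat{H}$ is a Dirichlet-type kernel: by Proposition~\ref{p:H}(iii) it merely decays like $2/(1+|j|P/N)$ and is nonzero at essentially every index. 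So for a fixed $\rho(s)$ all $R$ terms are generically nonzero, the pointwise products $\hat{H}^{rN/R}_{s^r_b - \rho(s)} \overline{\hat{H}^{r'N/R}_{s^{r'}_b - \rho(s)}}$ do not vanish, and no pointwise disjointness argument can close the proof. Indeed, the proof of Lemma~\ref{l:non} treats exactly these $r' \neq r$ terms as a genuinely nonzero random variable $X$, controlled only through its second moment.

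The mechanism the paper actually uses is the independence of the random phases $t^b_r$, $r = 0,\ldots,R-1$, which are part of the measurement's randomness and over which the expectation is also taken --- not averaging over $\rho$, as in your first alternative. Conditioned on everything else, the cross term for $r \neq r'$ carries the factor $E\bigl[\omega^{t^b_r - t^b_{r'}}\bigr] = E\bigl[\omega^{t^b_r}\bigr] E\bigl[\omega^{-t^b_{r'}}\bigr] = 0$, so all off-diagonal terms vanish exactly in expectation, and your diagonal computation then finishes the proof verbatim. (For what it is worth, your $\rho$-averaging idea could be pushed through, but only via Parseval: the $\rho$-average of a cross term is an inner product of two Fourier-domain response vectors, which equals the inner product of the corresponding time-domain windows, and \emph{those} are disjoint. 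That is a legitimate alternative route, but it is an argument you would have to make; it is not the pointwise vanishing you propose to rely on.) As written, your proof correctly names the diagonal bound but has no valid reason for discarding the cross terms, and the route you favor would fail.
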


\begin{proof}
\begin{eqnarray}
E\left[ |{\bf a}^b {\bf F'}_{\rho(s)}|^2 \right ] & = &  E\left[ |\sum_{r=0}^{R-1} (({\bf F}_{s^r_b} \circ {\bf H}^{r N/R}) \omega^{t_r^b} ) {\bf F'}_{\rho(s)}|^2\right] \\
 \label{e:step}
& = &  E\left[ |\sum_{r=0}^{R-1}  \hat{H}_{s^r_b - \rho(s)}^{r N/R}  \omega^{t_r^b}|^2 \right]\\
& = &    \sum_{r=0}^{R-1} E\left[|\hat{H}_{s^r_b - \rho(s)}|^2 \right]
\end{eqnarray}
where in step \ref{e:step} we used Claim~\ref{c:measurement} and the independence of the variables $t_r^b$, $r=0 \ldots R-1$. Since $i=s^r_b - \rho(s)$ is distributed uniformly at random in  $[N]$, by Claim~\ref{c:C}:
 
 \[  \sum_{r=0}^{R-1}  E\left[|\hat{H}_{s^r_b - \rho(s)}|^2 \right]  \le  \sum_{r=0}^{R-1} 1/N \sum_i |\hat{H}_{i} |^2
  \le R/N  \cdot N/P  \cdot C= C R/P 
  \]
\end{proof}

\begin{lemma} Suppose that $|s^r_b - i| \le \frac{N}{2P}$. Then 
\[  \Pr [ |{\bf a}^b {\bf F'}_i|^2  \ge \frac{1}{4(2 \pi)^2}] \ge 5/6\]
\label{l:non}
\end{lemma}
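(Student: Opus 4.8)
The plan is to isolate the single ``good'' term in the expansion of ${\bf a}^b {\bf F'}_i$ and to treat the remaining $R-1$ terms as noise that is small in expectation. Applying Claim~\ref{c:measurement} together with the time-shift theorem, I would write
\[ {\bf a}^b {\bf F'}_i = \sum_{r=0}^{R-1} \hat{H}^{rN/R}_{s^r_b - i}\, \omega^{t_r^b}, \]
and split this as $G + E$, where $G = \hat{H}^{rN/R}_{s^r_b - i}\,\omega^{t_r^b}$ is the term for the index $r$ satisfying $|s^r_b - i| \le N/(2P)$, and $E$ collects the other $R-1$ terms. Since $|\hat{H}^t_j| = \hat{H}_j$, Proposition~\ref{p:H}(ii) gives $|G| = \hat{H}_{s^r_b - i} \ge \frac{1}{2\pi}$. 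Thus it suffices to show that the noise obeys $|E| \le \frac{1}{4\pi}$ with probability at least $5/6$: the triangle inequality then yields $|{\bf a}^b {\bf F'}_i| \ge |G| - |E| \ge \frac{1}{2\pi} - \frac{1}{4\pi} = \frac{1}{2(2\pi)}$, and squaring gives the claimed bound $\frac{1}{4(2\pi)^2}$.

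To control $E$, I would repeat the second-moment computation of Lemma~\ref{c:tail}. Exactly as there, the independence of the phases $t_{r'}^b$ (so that $E[\omega^{t_{r'}^b}]=0$) kills every cross term in the expansion of $E[|E|^2]$, leaving $E[|E|^2] = \sum_{r' \ne r} E\!\left[|\hat{H}_{s^{r'}_b - i}|^2\right]$. The step that needs care is the conditioning: we are restricting to the good event $|s^r_b - i| \le N/(2P)$, so I must argue that the remaining offsets $s^{r'}_b - i$ are still uniform on $[N]$. This is precisely what pairwise independence of ${\cal R}$ provides — fixing the value of one offset leaves each of the others marginally uniform — so each summand equals $\frac{1}{N}\sum_j |\hat{H}_j|^2 \le C/P$ by Claim~\ref{c:C}, giving $E[|E|^2] \le CR/P$.

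Finally I would apply Markov's inequality to $|E|^2$:
\[ \Pr\!\left[|E| > \tfrac{1}{4\pi}\right] = \Pr\!\left[|E|^2 > \tfrac{1}{16\pi^2}\right] \le 16\pi^2\, E[|E|^2] \le 16\pi^2\, C R/P. \]
For the intended parameter regime (with $R/P$ small enough, say $R/P \le 1/(96\pi^2 C)$) this is at most $1/6$, which is exactly the failure probability we can afford. Combining this with the triangle-inequality bound of the first paragraph yields $\Pr[\,|{\bf a}^b {\bf F'}_i|^2 \ge \frac{1}{4(2\pi)^2}\,] \ge 5/6$.

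I expect the main obstacle to be the conditioning issue in the second paragraph: one must verify that restricting to the event $|s^r_b - i| \le N/(2P)$ does not bias the distribution of the other offsets, and that pairwise independence alone (rather than full independence) suffices to preserve the second-moment bound $E[|E|^2] \le CR/P$. Everything else reduces to the triangle inequality and a routine Markov estimate.
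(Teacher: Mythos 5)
Your skeleton (isolate the good term $G$, bound the noise second moment, Markov, then the triangle inequality) is exactly the paper's, and your target bound is even numerically consistent, since $R/P = R^2/N = 1/B$. But the step you yourself flagged as the delicate one is where the proof breaks: the other offsets $s^{r'}_b - i$, $r' \neq r$, are \emph{not} uniform on $[N]$, conditioned or otherwise, and pairwise independence of ${\cal R}$ has no bearing on them. The family ${\cal R}$ randomizes the signal-side indices $\rho(\cdot)$; the measurement positions $s^0_b, s^1_b, \ldots$ are rigidly structured --- they are separated by $P$ --- so once the hypothesis $|s^r_b - i| \le \frac{N}{2P}$ places $i$ near $s^r_b$, every other offset is pinned near the nonzero multiple $(r'-r)P$. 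Even in the application (Theorem~\ref{t:main}, Part (1)), where $i = \rho(s)$ is random, all $R$ offsets are shifts of the \emph{same} random variable $\rho(s)$, hence perfectly correlated; pairwise independence of $\rho$ on \emph{distinct signal coordinates} says nothing about them. So you cannot equate each summand with $\frac{1}{N}\sum_j |\hat{H}_j|^2$ as in Lemma~\ref{c:tail}: that averaging argument has no analogue here, where the only randomness left is the phases $t_{r'}^b$.

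The fix is to turn the failure of uniformity into the asset it actually is: the conditioning makes the other offsets deterministically \emph{far}. Since the $s^{r'}_b$ are separated by $P$, the $d$-th nearest index $r' \neq r$ satisfies $|s^{r'}_b - i| \ge dP - \frac{N}{2P} \ge dP/2$ for $B$ large enough. The independent phases kill the cross terms exactly as you said, so $E[|E|^2] = \sum_{r'\neq r} |\hat{H}_{s^{r'}_b - i}|^2$ is a deterministic sum, and one bounds it with the pointwise decay of Proposition~\ref{p:H}(iii), $|\hat{H}_j| \le \frac{2}{1+|j|P/N}$ --- a bound your proof never invokes --- giving
\[ \sum_{r'\neq r} |\hat{H}_{s^{r'}_b - i}|^2 \le 2\sum_{d=1}^{R} \left(\frac{2}{1 + (P/N)\, dP/2}\right)^2 \le \frac{8N}{P^2}\sum_{d\ge 1} \frac{1}{d^2} = O(N/P^2) = O(1/B). \]
Since $N/P^2 = R/P$, this recovers precisely the $O(R/P)$ bound you wanted, after which your Markov and triangle-inequality steps go through verbatim. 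In short: the statement and your final numerics are fine, but the second-moment bound must come from separation plus the decay of $\hat{H}$ (the paper's route), not from any appeal to independence or uniformity of the offsets.
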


\begin{proof}

\begin{eqnarray}
|{\bf a}^b {\bf F'}_i|^2 & = & |\sum_{r'=0}^{R-1}  \hat{H}_{s^{r'}_b - i}^{r' N/R} \omega^{t_{r'}^b}|^2  \\
& = & | \hat{H}_{s^r_b-i} \omega^{t_r^b} +  \sum_{r'\neq r} \hat{H}_{s^{r'}_b - i} \omega^{t_{r'}^b}|^2  \\
& = & | \hat{H}_{s^r_b-i} \omega^{t_r^b} -  X|^2
\end{eqnarray}
We know from Proposition~\ref{p:H} that $| \hat{H}_{s^r_b-i} \omega^{t_r^b}| \ge \frac{1}{2 \pi}$. We will show that the probability of $X \ge  \frac{1}{2(2 \pi)}$ is at most $1/6$. 
It will follow that $|{\bf a}^b {\bf F'}_i|^2 \ge \frac{1}{4(2 \pi)^2}$ with the probability of at least $5/6$.

Recall that $s^0_b, s^1_b \ldots $ are separated by $P$. Therefore, for $r' \neq r$, we have $|s^{r'}_b - i| \ge P - |s^r_b - i| \ge P- \frac{N}{2P}$, which is at least $P/2$ for $B$ large enough. By the independence of the variables $t_{r'}^b$ and by Proposition~\ref{p:H} we have:
\begin{eqnarray*}
E[ X^2] & = & E[ |\sum_{r'\neq r} \hat{H}_{s^{r'}_b - i}  \omega^{t_{r'}^b} |^2 ]\\
&  =  & \sum_{r'\neq r} E[ |\hat{H}_{s^{r'}_b - i}  \omega^{t_{r'}^b} |^2 ]\\
& \le & 2 \sum_{d=1}^R \left ( \frac{2}{1+P/N \cdot d(P/2)} \right )^2\\
& \le & 8/(P^2 /N)   \sum_{d=1}^R (1/d)^2 \le 8 C N/P^2
\end{eqnarray*}

Since $P=N/R$ and $R^2=N/B$, the latter expression is bounded by $8 C/B$, which is less than 
$\frac{1}{6(4\pi)^2}$ for $B$ large enough. It follows that the probability of $X^2 \ge \frac{1}{(4\pi)^2} = 6 E[X^2]$ is at most $1/6$.
\end{proof}

\begin{proof}[of Theorem~\ref{t:main}, Part (1)] Suppose that $s \in S$. Select $s^r_b$ that is closest to $\rho(s)$. Note that $|s^r_b - \rho(s)| \le \frac{N}{2P}$, which by Lemma~\ref{l:non} implies $|{\bf a}^b {\bf  F'}_{\rho(s)} |^2 \ge \frac{1}{ 4(2 \pi)^2}$ with the probability of at least $5/6$.

We now lower bound  $T(s)$ as follows

\begin{eqnarray}
T(s) &  \ge & |{\bf a}^b {\bf F'} {\bf P}_{\rho,b} {\bf x}|^2 |{\bf a}^b {\bf F'} {\bf P}_{\rho,b} {\bf e}_s|^2  \\
& = & |Yx_s -X|^2 |Y|^2
\end{eqnarray}
where $Y= \omega^{\tau(s) } {\bf a}^b {\bf F'}_{\rho(s)}$ and \\
$X=\sum_{s' \in S-\{s\}} \omega^{\tau(s') } {\bf a}^b {\bf F'}_{\rho(s')} x_{s'} $.

We can bound $E[|X|^2]$ as follows.
\begin{eqnarray}
& & E[ |\sum_{s' \in S-\{s\}}  \omega^{\tau(s')} x_{s'} {\bf a}^b {\bf F'}_{\rho(s')}|^2  ]\\
& =   & \sum_{s' \in S-\{s\}} x^2_{s'}  E[|{\bf a}^b {\bf F'}_{\rho(s')}|^2] \\
& \le & \sum_{s' \in S-\{s\}} x^2_{s'}  C R/P \\
& = &  \|{\bf x}\|_2^2 C R/P  \le C/B \le 1/K \frac{1}{6\cdot  (8 \pi^2)^2}
\end{eqnarray}
where we used Parseval's identity, Lemma \ref{c:tail} and that $B$ is large enough.
Therefore, we have that $\Pr[X^2 \ge  \frac{1}{ (8 \pi)^2}] \le 1/6$. By Lemma~\ref{l:non} we have that, with probability at least $1-1/6-1/6$,  $ T(s) \ge ( \frac{1}{2(2 \pi)} -  \frac{1}{8 \pi})^2 (\frac{1}{2(2 \pi)})^2/K $.
\end{proof}

\begin{proof}[of Theorem~\ref{t:main}, Part (2)]  Suppose that $s \notin S$. We have
\begin{eqnarray*}
& & E[T(s)]\\
 & \le &  \sum_{b=0}^{B-1} E_{\rho(s), \rho(s'), \tau} [ |\sum_{s' \in S} \omega^{\tau(s')} x^2_{s'}  {\bf a}^b {\bf F'}_{\rho(s')}|^2    |{\bf a}^b {\bf F'}_{\rho(s)}|^2]  \\
& = & \sum_{b=0}^{B-1}  E_{\rho(s'), \tau} [ |\sum_{s' \in S} \omega^{\tau(s')} x^2_{s'}  {\bf a}^b {\bf F'}_{\rho(s')}|^2] E_{\rho(s)} [ |{\bf a}^b {\bf F'}_{\rho(s)}|^2]\\
& \le & CR/P \sum_{b=0}^{B-1} \sum_{s' \in S} x^2_{s'} E [ |{\bf a}^b {\bf F'}_{\rho(s')}|^2]\\
& \le & (CR/P)^2 B \|{\bf x}\|_2^2 \le C^2 /B \le T/3
\end{eqnarray*}
where we assume that $B$ is large enough. 
By Markov inequality it follows that $\Pr[T(s) \ge T ] \le 1/3$.
\end{proof}

\end{appendices}


\let\oldthebibliography=\thebibliography
\let\endoldthebibliography=\endthebibliography
\renewenvironment{thebibliography}[1]{%
    \begin{oldthebibliography}{#1}%
      \setlength{\parskip}{0ex}%
      \setlength{\itemsep}{0ex}%
}%
{%
\end{oldthebibliography}%
}
{
\bibliographystyle{abbrv}
\bibliography{ourbib}
}

\end{sloppypar}
\end{document}